\Crefname{appendix}{Supplementary Material}{Supplementary Materials}
\DeclareMathOperator*{\argmin}{arg\,min}
\newcommand*\dif{\mathop{}\!\mathrm{d}}
\newtheorem*{theorem*}{Theorem}
\newtheorem{proposition}{Proposition}
\newcommand{\I}{\mathcal{I}}
\newcommand{\U}{\mathcal{U}}
\newcommand{\f}{f}
\newcommand{\F}{F}
\renewcommand{\O}{\mathcal{O}}
\begin{document}
\title{Quantum optical classifier with superexponential speedup}
\author{Simone Roncallo\,\orcidlink{0000-0003-3506-9027}}
	\email[Simone Roncallo: ]{simone.roncallo01@ateneopv.it}
	\affiliation{Dipartimento di Fisica, Università degli Studi di Pavia, Via Agostino Bassi 6, I-27100, Pavia, Italy}
	\affiliation{INFN Sezione di Pavia, Via Agostino Bassi 6, I-27100, Pavia, Italy}
	
\author{Angela Rosy Morgillo\,\orcidlink{0009-0006-6142-0692}}
	\email[Angela Rosy Morgillo: ]{angelarosy.morgillo01@ateneopv.it}
	\affiliation{Dipartimento di Fisica, Università degli Studi di Pavia, Via Agostino Bassi 6, I-27100, Pavia, Italy}
	\affiliation{INFN Sezione di Pavia, Via Agostino Bassi 6, I-27100, Pavia, Italy}
	
\author{Chiara Macchiavello\,\orcidlink{0000-0002-2955-8759}}
	\email[Chiara Macchiavello: ]{chiara.macchiavello@unipv.it}
	\affiliation{Dipartimento di Fisica, Università degli Studi di Pavia, Via Agostino Bassi 6, I-27100, Pavia, Italy}
	\affiliation{INFN Sezione di Pavia, Via Agostino Bassi 6, I-27100, Pavia, Italy}
	
\author{Lorenzo Maccone\,\orcidlink{0000-0002-6729-5312}}
	\email[Lorenzo Maccone: ]{lorenzo.maccone@unipv.it}
	\affiliation{Dipartimento di Fisica, Università degli Studi di Pavia, Via Agostino Bassi 6, I-27100, Pavia, Italy}
	\affiliation{INFN Sezione di Pavia, Via Agostino Bassi 6, I-27100, Pavia, Italy}
	
\author{Seth Lloyd\,\orcidlink{0000-0003-0353-4529}}
	\email[Seth Lloyd: ]{slloyd@mit.edu}
	\affiliation{Massachusetts Institute of Technology, Cambridge, MA 02139, USA}
	
\begin{abstract}
	Classification is a central task in deep learning algorithms. Usually, images are first captured and then processed by a sequence of operations, of which the artificial neuron represents one of the fundamental units. This paradigm requires significant resources that scale (at least) linearly in the image resolution, both in terms of photons and computational operations. Here, we present a quantum optical pattern recognition method for binary classification tasks. It classifies objects without reconstructing their images, using the rate of two-photon coincidences at the output of a Hong-Ou-Mandel interferometer, where both the input and the classifier parameters are encoded into single-photon states. Our method exhibits the behaviour of a classical neuron of unit depth. Once trained, it shows a constant $\mathcal{O}(1)$ complexity in the number of computational operations and photons required by a single classification. This is a superexponential advantage over a classical artificial neuron.
\end{abstract}
\keywords{Quantum classifier; Quantum optical neuron; Quantum neural networks; Hong-Ou-Mandel effect;}
\maketitle

\section{INTRODUCTION}
Image classification has been significantly fostered by the introduction of deep learning methods, which provide several algorithms that can learn and extract image features. Examples include feedforward neural networks, convolutional neural networks and vision transformers \citep{art:LeNet,art:AlexNet,art:ResNet,art:ViT}. The artificial neuron, also called perceptron \citep{art:Rosenblatt}, represents the fundamental unit of such architectures. In this model, encoded data are processed through a set of weighted trainable connections, by taking the scalar product between the input and the vector of weights. The output is further post-processed, including a bias and an activation function, which is usually non-linear \citep{book:Goodfellow}. Image classification implies a two-fold cost. Computational processing requires a number of operations that scales, at least, linearly in the image resolution. Similarly, the optical cost of image capturing undergoes the same scaling in the number of photons.

When combining multiple neurons, the large number of parameters involved motivates a consistent effort in reducing the cost of deep learning algorithms, e.g. by leveraging classical implementations that bypass hardware in an all-optical way \citep{art:Shastri,art:Lin,art:Zuo,art:Colburn,art:Li,art:Luo,art:McMahon}. Quantum mechanical effects, like superposition and entanglement, can provide a significant speedup in such tasks \citep{art:Mohseni,art:Cai}, e.g. by building quantum analogues of the perceptron \citep{art:Mangini,art:Tacchino,art:Mangini-Tacchino}, by employing variational methods \citep{art:Cerezo1,art:Cerezo2} or quantum-inspired approaches \citep{art:Senokosov}. Quantum optical neural networks harness the best of both worlds, i.e. deep learning capabilities from quantum optics \citep{art:Steinbrecher,art:Killoran,art:Bartkiewicz,art:Sui,art:AonanZhang,art:Stanev,art:Wood}.

In this paper, we introduce a quantum optical setup to classify objects without reconstructing their images. Our approach relies on the Hong-Ou-Mandel effect, for which the probability that two photons exit a beam splitter in different modes, depends on their distinguishability \citep{art:Mandel,art:Garcia-Escartin,art:Sadana,art:Hiekkamaki}. In our implementation, an input object is targeted by a single-photon source, and eventually followed by an arbitrary lens system. The single-photon state interferes with another one, which encodes a set of trainable parameters, e.g. through a spatial light modulator. After the Hong-Ou-Mandel interferometer, the photons are collected by two bucket detectors without spatial sensitivity, one for each output mode. Classification occurs by measuring the rate of two-photon coincidences at the output (see \cref{fig:Setup}).
\begin{figure}[ht]
	\centering
	\includegraphics[width = 0.45 \textwidth]{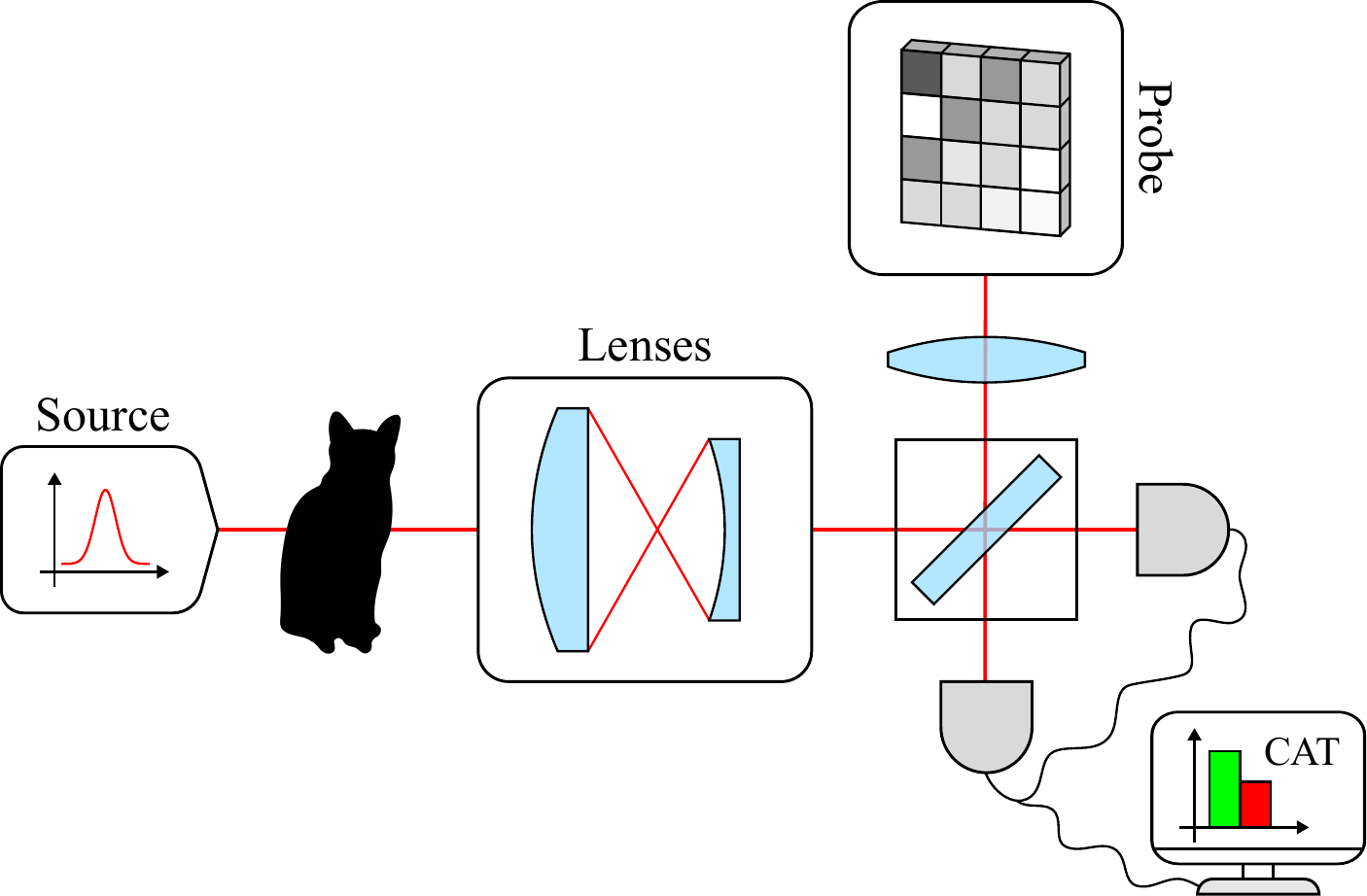}%
	\caption{\label{fig:Setup}Quantum optical neuron implemented by the Hong-Ou-Mandel interferometer. An object is targeted by a single-photon source and classified through the rate of two-photon coincidences at the interferometer output, obtained by placing two photodetectors with no spatial resolution, i.e. without reconstructing the object image. In the top branch, an additional thin lens can translate the classification problem to the Fourier domain.}
\end{figure}

The Hong-Ou-Mandel effect has been successfully applied to quantum kernel evaluation \citep{art:Bowie}, which can compute distances between pairs of data points in the feature space. In this case, each point is sent to one branch of the interferometer, encoded in the temporal modes of a single-photon state. In our method, the interferometer has only one independent branch, which takes the spatial modes of a single-photon state reflected off the target object. The other branch remains fixed after training, and contains the layer of parameters. After the measurement, the response function of our apparatus mathematically resembles that of a classical neuron. For this reason, we refer to our setup as quantum optical neuron. By analytically comparing the resource cost of the classical and quantum neurons, we show that our method requires constant $\mathcal{O}(1)$ computational operations and injected photons, whereas the classical methods are at least linear in the image resolution: a superexponential advantage.

\section{METHOD}
In this section, we discuss the apparatus of \cref{fig:Setup}, without explicitly modelling the probe. Two single-photon states are fed into the left and top branches of a $50\!:\!50$ beam splitter, acting as input and processing layers, respectively. In the left branch, the single-photon source reflects off the object, and reaches the beam splitter after a linear optical system. In the top branch, we consider a generic single-photon state, which depends on a set of trainable real parameters. We count the two-photon coincidences at the beam splitter output. We show how to interpret the Hong-Ou-Mandel response as the one produced by a single-layer neural network-like operation on the object image.

We call input and probe modes, i.e. $a$ and $b$, those fed into the left and top branches of the interferometer. In the input branch, a single photon with spectrum $\phi$ is generated at the longitudinal origin $z=0$, followed by an object with two-dimensional shape $\O$. An imaging system with transfer function $\mathcal{L}_d$, e.g a pinhole or a linear optical apparatus, is placed after the object. Here, $z_o$ and $z_i$ are the longitudinal positions of the object and the image plane, respectively, and $d = z_i - z_o$ their displacement.

The output of the imaging optics reads (see Supplementary Material A)
\begin{equation}
    \ket{\Psi_\I} = \int \dif^2k \ \hat{\I}_\omega(k | \O) a^\dagger_\omega(k) \ket{0} \ ,
    \label{eq:PsiO}
\end{equation}
with $\hat{\I}_\omega(\cdot|\O) = [(\hat{\phi}_\omega \hat{\mathfrak{H}}_{z_o})*\hat{\O}] \hat{\mathcal{L}}_{d}$ the total transfer function from the single-photon source to the image plane, and $a^\dagger_\omega(k)$ the creation operator of a photon in the input mode, acting on the vacuum state $\ket{0}$. The hat operator denotes the two-dimensional Fourier transform on the transverse coordinates plane, $*$ the convolution operation, $\mathfrak{H}_{z_o}$ the transfer function from the source to the object plane, $k = (k_x, k_y)$ the transverse momentum, and $\omega$ the frequency conjugated to the temporal degree of freedom of the electromagnetic potential.

In the probe branch, a generic quantum state is prepared, eventually followed by a linear optical system. At the beam splitter plane, the probe state reads
\begin{equation}
    \ket{\Psi_\U} = \int \dif^2k \ \hat{\U}_\omega(k|\lambda) b_\omega^\dagger(k) \ket{0} \ ,
    \label{eq:PsiProbe}
\end{equation}
with $\lambda = \{\lambda_{i_1 \ldots i_n}\}$ a collection of (trainable) parameters, $\U$ the spatial spectrum of the probe, and $b^\dagger_\omega(k)$ the creation operator of a photon in the probe mode.

A photodetector with no spatial resolution is placed at the output of each branch. After feeding both states into a $50\!:\!50$ beam splitter, the rate of two-photon coincidences reads
\begin{align}
    & p(1_a \cap 1_b|\lambda, \O) = \frac{1}{2}\left[\alpha_\lambda(\O) - \f_\lambda(\O) \right] \ ,
    \label{eq:BosonicCoincidences} \\
    & \text{with} \ \ \begin{aligned}
		\alpha_\lambda(\O) &= || \I_\omega(\cdot|\O) ||^2 || \U_\omega(\cdot|\lambda) ||^2  \ , \\
		\f_\lambda(\O) &=  \left| \langle \I_\omega(\cdot|\O),\U_\omega (\cdot|\lambda) \rangle \right|^2 \ ,
	\end{aligned}
\end{align}
where $|| \cdot ||$ and $\langle \cdot , \cdot \rangle$ denote the $L^2$-norm and inner product, respectively. Here, $\alpha_\lambda(\O)$ depends on the normalization of the input and probe states, which can be $\alpha_{\lambda} < 1$ in the presence of optical losses. Whenever the two spectra are indistinguishable, i.e. when $\U$ perfectly matches $\I$, coincidences are not observed. On the other hand, the more distinguishable the input and the probe states are, the smaller $\langle \I(\cdot|\O),\U(\cdot|\lambda) \rangle$ becomes and the rate of coincidences increases. See Supplementary Material B for a derivation, and Supplementary Material D for a similar result in the Fourier domain.

At the image plane $I$, with transverse coordinates $r = (x,y)$, we have
\begin{equation}
	 \f_\lambda(\O) = \left| \int_I \dif^2 r \ \I_\omega(r|\O)\U_{\omega}^*(r|\lambda) \right|^2 \ .
	 \label{eq:NetworkBraketPosition}
\end{equation}
This integral measures the point-wise overlap between the input image and the probe. We interpret it as the prediction of our classification model, where $\f_\lambda \in [0,1]$ represents the probability that $\I$ belongs to the class of $\U$. In particular, $\f_\lambda \to 0$ ($\f_\lambda \to 1$) when the class of $\I$ is orthogonal to (is the same of) $\U$. In the next section, we show how to encode a generic class in $\U$, by means of the optimization of the set of parameters $\lambda$.

The output measurement introduces a non-linear operation after the beam splitter, represented by the squared absolute value in the left-hand side of \cref{eq:NetworkBraketPosition}. We increase the predictability of our model, by enhancing this non-linearity through the following post-processing operations. Consider the sigmoid (logistic) function
\begin{equation}
	\sigma(x) := \frac{1}{1+e^{-\beta x + \gamma}} \ ,
	\label{eq:SigmoidFunction}
\end{equation}
where $\beta, \gamma$ are hyperparameters, i.e. constants with respect to the training process. We introduce an additional trainable parameter $b \in \mathbb{R}$, called bias, which, combined with $f_{\lambda}$ and $\sigma$, yields
\begin{equation}
	\F_{b\lambda}(\O) = \sigma(\f_{\lambda}(\O) + b) \ ,
	\label{eq:FinalOutputPosition}
\end{equation}
which determines the label predicted by the Hong-Ou-Mandel apparatus. These modifications can improve the performance of the neuron. The sigmoid increases the non-linearity introduced by the squared absolute value, and so the predictability of the model. In addition, the bias is introduced on heuristic motivations: it compensates the constraint given by the normalization in \cref{eq:BosonicCoincidences}, while enhancing the robustness of our protocol against optical losses (which may affect the above-mentioned normalizability, yielding $\alpha_{\lambda} < 1 $).

We now discuss the training stage. Consider a training set, i.e. an ensemble of objects $\{\O_j\}$ with target labels $\{y_j \in \{0,1\}\}$. We separately feed each object into the input branch of the interferometer. Predicted and target classes are compared in terms of their binary cross-entropy, which is used as loss function of a gradient descent optimizer. The optimizer updates $\lambda$ through the derivative of the loss function, whose only model-dependent contribution is
\begin{equation}
	\partial_\lambda \f = 2 \Re \left[ \langle \I_{\omega}, \U_{\omega} \rangle \langle \I_{\omega}, \partial_{\lambda}\U_{\omega} \rangle^* \right] \ .
	\label{eq:DerivativeComplete}
\end{equation}	
Ideally, the training is complete after finding a set of parameters  that minimizes the loss. When the computation of the loss function derivative is not possible, we can replace the gradient descent method with the coordinate descent one. For each epoch, the loss function is evaluated on a discrete neighbourhood $\Lambda$ in the parameters space. The update rule is
\begin{equation}
	\lambda \to \argmin_{\lambda \in \Lambda} H(\lambda) \ . 
\end{equation}
Notice that our model is resilient against the issue of gradient explosion \citep{art:Glorot}, since it depends on physical data and functions only. See Supplementary Material C for a discussion. 

There is a formal relationship between the post-processed output of the Hong-Ou-Mandel interferometer of \cref{eq:FinalOutputPosition} and that of a classical neuron. Consider $f_{\lambda}(\O)$ discretized and vectorized in a mesh of $N$ cells, either in the spatial or in the Fourier domain. Then, \cref{eq:FinalOutputPosition} corresponds to the composition of a real-valued neuron, with $N$ trainable weights, square absolute value activation function and no bias, and a second neuron, with a scalar unit weight, sigmoid activation function and a trainable bias. Namely
\begin{equation}
   G_{bw}(x) = \sigma\big(|w \cdot x|^2 + b \big) \ ,
   \label{eq:PerceptronAnalogy}
\end{equation}
where $x \in \mathbb{C}^{N}$ is the input, while $w \in \mathbb{C}^{N}$ and $b \in \mathbb{R}$ are the weights and bias, respectively. We can formally identify $G_{bw}(x)$ with $F_{b\lambda}(\O)$ under the substitution
\begin{equation}
	\left( x, w \right) \xleftarrow{\sim} \big( \I_\omega (r|\O),  \U_\omega (r|\lambda) \big)  \ ,
\end{equation}
where $\xleftarrow{\sim}$ is the discretization and vectorization to $\mathbb{C}^{N}$. This analogy is represented in \cref{fig:Perceptron}.
\begin{figure}
    \centering
    \includegraphics[width = 0.425 \textwidth]{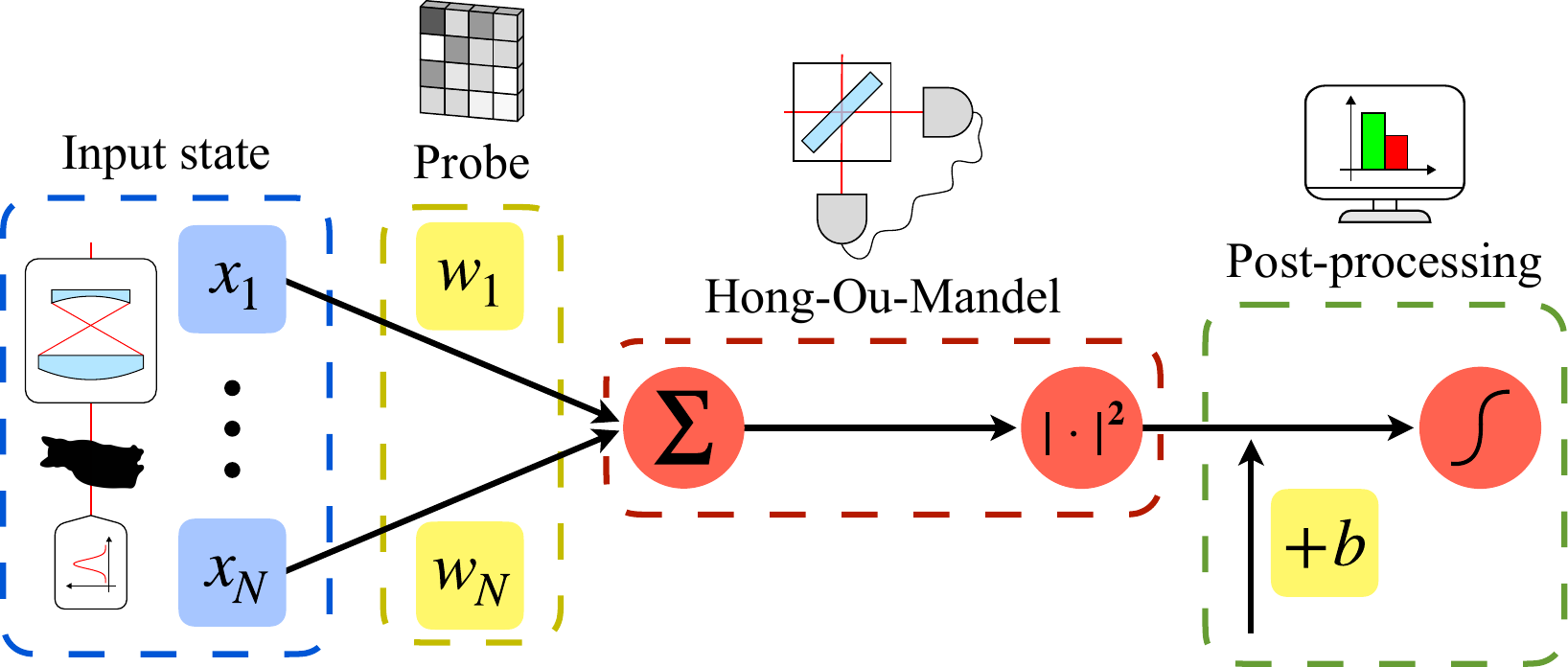}
    \caption{Mathematical equivalence between the Hong-Ou-Mandel and a classical artificial neuron. The left branch of the interferometer corresponds to the input layer, while the probe parameters are related to the trainable neuron weights. The rate of coincidences encodes the square absolute value of their scalar product, further post-processed by adding a bias and a sigmoid activation function.}
    \label{fig:Perceptron}
\end{figure}

\begin{table}[H]
	\centering
	\def\arraystretch{1.5}
	\setlength\tabcolsep{5pt}
	\begin{tabular}{|c|c|c|c|}
		\hline
		\multicolumn{2}{|c|}{Resources} & \multicolumn{1}{c|}{Quantum} & Classical \\ \hline
		\multicolumn{2}{|c|}{\parbox{3cm}{\centering \ \\[0.5pt] Computational \\ (\# of operations) \\[3pt]}} & $\mathcal{O}(1)$ & $N$ \\ \hline 
 		\multirow{2}{*}{\parbox{2.25cm}{\centering Optical \\ (\# of photons)}} & Imaging & None & $\eta N$ \\ \cline{2-4}
 		& Classification & $\mathcal{O}(1)$ & $\Omega(N)$ \\ \hline
	\end{tabular}
	\caption{\label{tab:ResourceCost}Comparison between the quantum optical neuron and its classical counterpart. Resource cost when reconstructing and classifying an image of $N$ pixels with signal-to-noise ratio $\eta$.  Our method achieves a superexponential speedup: $\Omega(N) \to \mathcal{O}(1)$. See Supplementary Material F for an expanded table.}
\end{table}

\begin{figure*}
	\centering
	\includegraphics[width = 1 \textwidth]{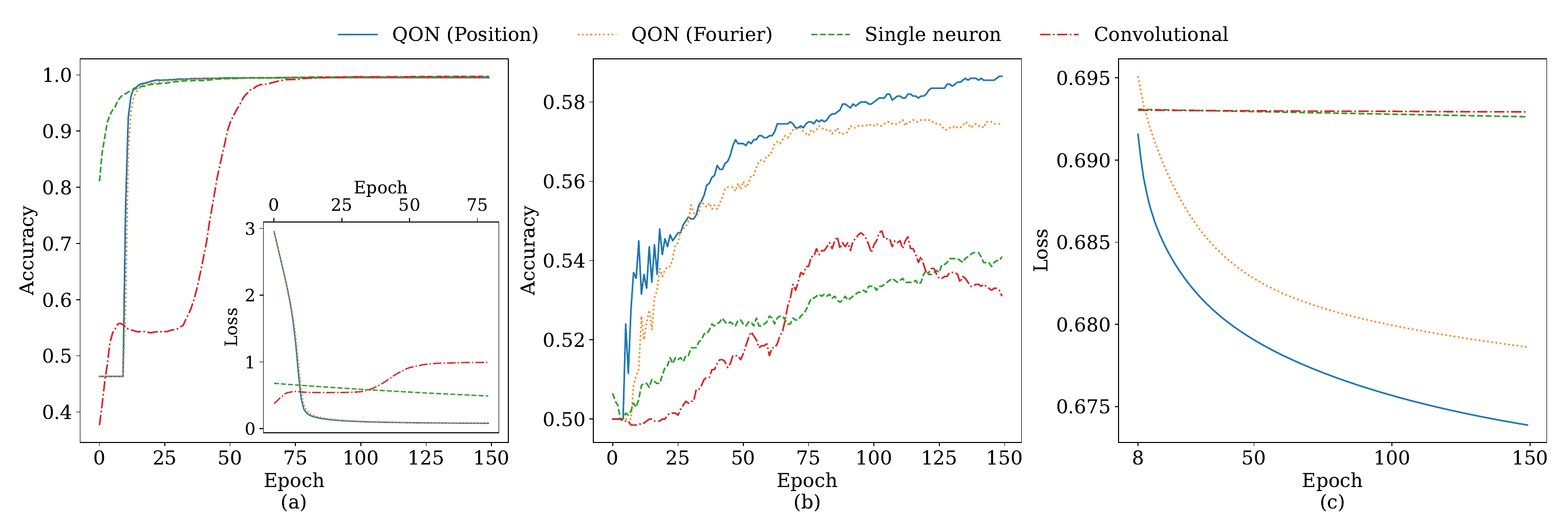}%
	\caption{\label{fig:Training}History plot of a quantum optical neuron, a classical artificial neuron and a convolutional network. All models are trained with the same number of $\sim 1024$ parameters, optimizer and learning rates. The quantum optical neuron (QON) is modelled by an amplitude modulated probe made of $32 \times 32$ pixels, both in the spatial (blue solid line) and in the Fourier (orange dotted line) domains. Comparison with a single artificial neuron (green dashed line) and a convolutional network (red dash-dotted line). The optimization is performed with learning rates $\eta_\lambda = 0.075$ and $\eta_b = 0.005$. (a) Accuracy versus the number of training epochs for the MNIST dataset. The models are trained to distinguish among images of zeros and ones, showing compatible results in terms of trainability and accuracy,  whose final value is above $99\%$. The inset is a history plot of the binary cross-entropy, used as loss function in the gradient descent optimization. (b-c) Accuracy and binary cross-entropy plots versus the number of training epochs for the CIFAR-10 dataset. The models are trained to classify images of cats and dogs. Our method reaches an asymptotic accuracy above $58\%$, showing an advantage with respect to its classical counterparts.}
\end{figure*}

A classical neuron requires at least $N$ photons and $N$ computational operations to classify an image composed of $N$ pixels. Our setup bypasses both costs, by leveraging two essential features. On the one hand, it is completely optical, avoiding the computational need of processing the image. On the other hand, it classifies patterns through the Hong-Ou-Mandel effect, bypassing the photon cost of imaging. In both ways, it provides a superexponential speedup, from $\mathcal{O}(N)$ to $\mathcal{O}(1)$. Parameters can be trained by either classical simulation or direct experimental implementation. Simulating the apparatus would require exponential resources, although reducing the overall experimental effort. Such cost can be mitigated by transferring the parameters from classical models, if available. An experimental approach would guarantee the superexponential speedup even for training, introducing at most an overhead depending on the number of epochs and the pattern complexity. Photon losses due to absorption introduce a constant overhead in both the classical and quantum strategies, which depends on the total reflectivity of the object. We summarize the advantage in \cref{tab:ResourceCost}. See Supplementary Material F for a detailed discussion and derivation, and for an additional comparison with classical interferometric schemes, over which our method still exhibits a superexponential speedup.

Finally, we specialize our discussion by replacing the generic probe state $\U$ with a toy model of an amplitude spatial light modulator (SLM), placed in the top branch of the Hong-Ou-Mandel interferometer, e.g. a liquid crystal display with negligible losses \citep{art:Neff}. Different approaches can be investigated, such as phase-only SLM \citep{art:ZichenZhang}, which may exhibit superior resiliency against losses. Consider a pattern on a greyscale grid with $N$ real amplitudes $\{\lambda_{\mu \nu}\}$. Each pixel, labelled by $(\mu,\nu)$, is represented by an $L \times L$ square with center $r_{\mu \nu} = (\mu + 1/2, \nu + 1/2)L$. Upon an overall parameter-independent normalization, the probe can be approximated as a combination of top-hat functions
\begin{equation}
	\U_{\omega}(r|\lambda) = \sum_{\mu,\nu} u(r-r_{\mu \nu})\frac{\lambda_{\mu \nu}}{||\lambda||} \ ,
	\label{eq:LCD}
\end{equation}
where $ ||\lambda||^2 = \sum_{\mu,\nu}\lambda_{\mu\nu}^2$ and $u(r) := \theta(r + L/2)-\theta(r - L/2)$, with $\theta$ the two-dimensional Heaviside step function. Under this choice, \cref{eq:NetworkBraketPosition} simplifies to
\begin{equation}
	\f_\lambda(\O) = \left| \sum_{\mu,\nu} (u \star  \I_{\omega})(r_{\mu\nu}) \frac{\lambda_{\mu \nu}}{||\lambda||} \right|^2 \ ,
	\label{eq:LCDLayerSpatial}
\end{equation}
where $\star$ is the cross-correlation operation. We introduce a bias and a sigmoid activation function, so that the post-processed output reads $\F_{b\lambda}(\O) = \sigma(\f_{\lambda}(\O) + b)$. Assuming that $\I$ is real, \cref{eq:DerivativeComplete} simplifies to
\begin{equation}
	\partial_{\mu \nu} \f \simeq 2 \frac{\sqrt{f}}{||\lambda||}\left[(u \star  \I_{\omega})(r_{\mu\nu}) - \sqrt{f} \frac{\lambda_{\mu \nu}}{||\lambda||} \right] \ ,
	\label{eq:LCDDerivativePosition}
\end{equation}
with $\partial_{\mu \nu}\f  := \partial f / \partial \lambda_{\mu \nu}$. This expression can be evaluated in an all-optical way, by taking the amplitude measurement of $\I$ directly in the left branch of the interferometer, before the beam splitter. This operation can be done offline, and once per training object. In the next section, we present a simulation of these results, for different choices of the dataset.

\section{RESULTS}
We present a simulation of the quantum optical classifier, comparing its performance against classical neural network architectures. We considered two widely recognized datasets: the MNIST, which contains $28 \times 28$ images of handwritten digits from $0$ to $9$, and the CIFAR-10, comprised of $32 \times 32$ color images distributed across $10$ different classes. A fair comparison is guaranteed by padding the MNIST resolution to $32 \times 32$ pixels, while converting the CIFAR-10 to greyscale (colors can be introduced by replacing the monochromatic source with a multi-wavelength one, or by considering multiple sets of parameters that account for the color decomposition of the image.). Each image is identified with the discretized spectrum of a single-photon state, bypassing the simulation of the imaging apparatus. In practice, this means replacing the target object and the lenses with a second SLM, which displays and encodes the input image. We adopted the binary cross-entropy as loss function, combined with the standard (non-stochastic) gradient descent optimizer, and the accuracy, i.e. the proportion of correct predictions over the total ones, computed on the test dataset, as figure of merit of our results. The training and test datasets are constructed with an approximate ratio of $85\!:\!15$. Simulated results are reported in \cref{fig:Training}, the setup is reported in \cref{fig:Simulation}. All the simulations are run in Python and TensorFlow \citep{soft:TensorFlow}. Our method demonstrates significative performances in both datasets. In the MNIST, it achieves accuracy exceeding $99\%$, when discerning between zeros and ones. In the CIFAR-10, it reaches accuracy above $58\%$, when distinguishing between cats and dogs. This difference reflects the complexity of the two classification tasks. 

We compared our model against conventional classifiers, i.e. a single neuron and a convolutional neural network, commonly employed in pattern recognition tasks \citep{art:Bengio, art:Rosenblatt, art:AlexNet}. Adopting the TensorFlow notation, the convolutional structure is: Conv2D ($10$, $3 \times 3$) $\to$ Conv2D ($4$, $2 \times 2$) $\to$ MaxPooling2D ($2 \times 2$). Roughly, all the architectures have $\sim 10^3$ trainable parameters. The performances are equal in the MNIST dataset, both in terms of trainability and final accuracy. In the CIFAR-10 dataset, our classifier outperforms the conventional ones, showing superior efficiency under a strongly-constrained parameters count. These findings emphasize the competitive accuracy of our method, and also its comparative advantage in pattern recognition tasks with a limited number of parameters.

Finally, we conducted additional tests to evaluate the robustness of our method. First, we performed the above comparisons by removing the post-processing operations, i.e. the bias and the sigmoid, using the square absolute value activation only. Under the same conditions of \cref{fig:Training}, the model maintains its training capabilities with reduced asymptotic accuracy, i.e. $88\%$ and $57\%$ for the MNIST and CIFAR-10 datasets, respectively. Then, we assessed resiliency by introducing sampling noise when estimating the coincidence rates at the Hong-Ou-Mandel output, for $100$ and $1000$ experimental repetitions. In the MNIST dataset, predictability remains unaffected. In the CIFAR-10 dataset, higher noise sensitivity is observed, although with compatible to superior capabilities with respect to its classical counterparts. In both cases, local fluctuations do not affect the global convergence of training. Results are reported in the Supplementary Fig. 1 and Supplementary Fig. 2.
\begin{figure}[th]
	\centering
	\includegraphics[width = 0.375 \textwidth]{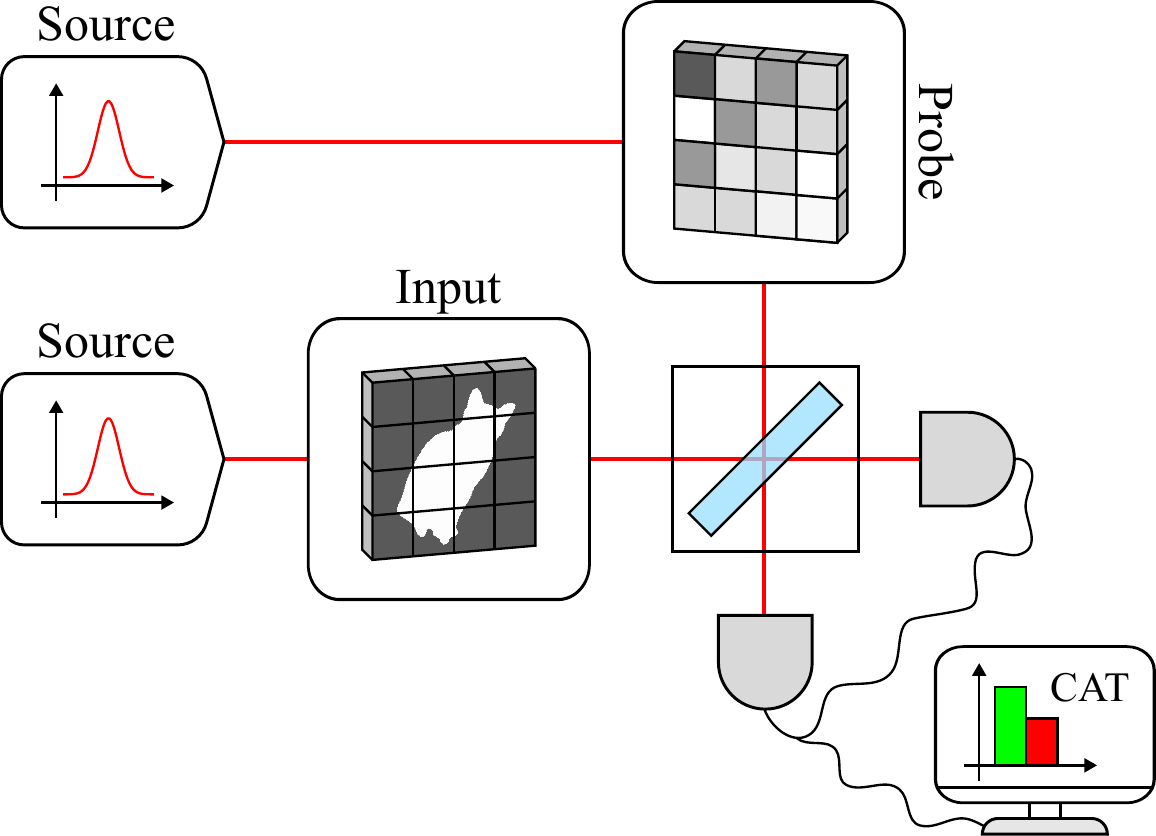}%
	\caption{\label{fig:Simulation}Digital quantum optical neuron. The target object and the optics are replaced by a second spatial light modulator (SLM). The discretized input and spectra correspond to the digital images shown on the respective screen. Both SLMs have the same number of pixels.}
\end{figure}

\section{CONCLUSIONS}
In summary, we introduced an interferometric setup of a quantum optical classifier, with the Hong-Ou-Mandel effect as cornerstone of our classification method. We demonstrated the mathematical relation between our model and a classical neuron, constrained to unit depth, showing their similarity in terms of structure and response function. Its design is completely optical and single-photon based: it provides a superexponential speedup with respect to its classical counterpart, in terms of number of photons and computational resources. Our classifier is inherently binary. However, it can be generalized to multiclass scenarios, adopting one-vs-one or one-vs-rest strategies. In this case, multiple sets of parameters have to be considered to subsequently or adversarially determine the target class. Despite the reduced accuracy against multiclass methods, such approaches would retrieve the original advantage, with an overhead of $\mathcal{O}(K)$ and $\mathcal{O}(K^2)$ operations, respectively, with $K$ the number of classes. After modelling the classifier in terms of a spatial light modulator, we numerically compared our performances against those of standard neural network architectures, showing compatible to superior capabilities in terms of accuracy and training convergence, under the same number of parameters and depending on the pattern complexity. Despite the single neuron limitations in classifying arbitrary complex patterns, our work paves the way for interferometric networks with universal approximation capabilities.

\section*{ACKNOWLEDGEMENTS}
S.R. acknowledges support from the PRIN MUR Project 2022SW3RPY. A.R.M. acknowledges support from the PNRR MUR Project PE0000023-NQSTI. C.M. acknowledges support from the National Research Centre for HPC, Big Data and Quantum Computing, PNRR MUR Project CN0000013-ICSC. L.M. acknowledges support from the PRIN MUR Project 2022RATBS4 and from the U.S. Department of Energy, Office of Science, National Quantum Information Science Research Centers, Superconducting Quantum Materials and Systems Center (SQMS) under Contract No. DE-AC02-07CH11359. S.L. acknowledges support from ARO, DOE, and DARPA.

%
\section*{CODE AVAILABILITY}
The underlying code that generated the data for this study is openly available in GitHub \citep{rep:QON}.

%


\appendix
\renewcommand{\figurename}{SUPPLEMENTARY FIG.} 
\setcounter{figure}{0}   

\renewcommand{\tablename}{SUPPLEMENTARY TABLE.} 
\setcounter{table}{0}   

\begin{widetext}
\section{SINGLE-PHOTON ENCODING\label{app:Imaging}}
In this section, we consider the single-photon state obtained at the output of the left branch of the Hong-Ou-Mandel apparatus, providing a detailed discussion of the input encoding. We adopt units in which $c = 1$.

Consider a generic single-photon state, generated by a monochromatic source with longitudinal position $z$
\begin{equation}
	\ket{\Psi} = \int \dif^3k \ \hat{\Phi}(\mathbf{k}) a^{\dagger}(\mathbf{k})\ket{0} \ ,
\end{equation}
with spectrum $\Phi$ and $ \mathbf{k} = (k_x,k_y,k_z)$. We neglect the polarization of the photon and consider the single-frequency-mode assumption \citep{art:Rezai}, i.e. we assume that the wavefront propagates along definite-sign $z$-directions only. Then, $k = (k_x,k_y)$ represents the only independent degrees of freedom of the single-photon state, which reads
\begin{align}
	\ket{\Psi} &= \int \dif^2k \ \hat{\phi}_\omega(k) a^\dagger_\omega(k)\ket{0} \ , \\
	 &= \int_S \dif^2r \ \phi_\omega(r) a_{\omega}^\dagger(r)\ket{0} \ ,
\end{align}
where $\hat{\phi}_\omega(k) = \hat{\Phi}\left(k_x,k_y,\sqrt{\omega^2-k_x^2-k_y^2}\right)$ and $r = (r_x,r_y)$ labels the transverse coordinates on the source plane $S$.

For simplicity, we assume that the source is placed at the longitudinal origin $z = 0$. Consider an object with two-dimensional shape $\O$, placed at longitudinal position $z_o$. After free-space propagation occurs, the single-photon spectrum undergoes spatial amplitude modulation \citep{book:Saleh}, that is $\Psi_{\O}(r) = \O(r) \Psi(r)_{\to}$, with $\Psi(r)_{\to}$ the spatial input wavefront on the object plane $O$. Namely
\begin{equation}
	\ket{\Psi_{\O}} = \int_O \dif^2r \ [\phi_\omega*\mathfrak{H}_{z_o}](r)\O(r) a^{\dagger}_{\omega}(r)\ket{0} \ ,
	\label{eq:ObjectTransfer}
\end{equation}
where $\mathfrak{H}_{z_o}$ denotes the free-space transfer function between the $S$ and $O$ planes. Using twice the convolution theorem, it follows that 
\begin{equation}
	\ket{\Psi_{\O}} = \int \dif^2k \ [(\hat{\phi}_\omega \mathfrak{H}_{z_o})*\hat{\O}](k) a^{\dagger}_\omega(k) \ket{0} \ .
\end{equation}
Consider a linear optical system with transfer function $\mathcal{L}$, with image plane at longitudinal position $z_i$. By applying again the convolution theorem to $\I_\omega(\cdot|\O) = \left((\phi_\omega*\mathfrak{H}_{z_o}) \O \right)*\mathcal{L}_{z_o - z_i}$, we obtain
\begin{equation}
	\ket{\Psi_\I} = \int \dif^2k \ \hat{\I}_\omega(k | \O) a^\dagger_\omega(k) \ket{0} \ ,
\end{equation}
with $\hat{\I}_\omega(k|\O) = [(\hat{\phi}_\omega \mathfrak{H}_{z_o})*\hat{\O}] \hat{\mathcal{L}}_{d}$. Notice that $\I_\omega(r|\O)$ describes the image formed on a screen placed at distance $d$ from the object.

\section{HONG-OU-MANDEL COINCIDENCES\label{app:Coincidences}}
In this section, we compute the rate of coincidences at the output of the Hong-Ou-Mandel interferometer, using the left and top branch states of the main discussion. We write the input-probe bipartite state as
\begin{equation}
	\ket{\Psi_\I} \otimes \ket{\Psi_\U} = \int \dif^2k_1 \dif^2k_2 \ \hat{\Psi}(k_1,k_2) a^\dagger(k_1) b^\dagger(k_2) \ket{0} \ ,
	\label{eq:PsiBipartite}
\end{equation}
with $\hat{\Psi}(k_1,k_2) = \hat{\I}(k_1 | \O) \hat{\U}(k_2|\lambda)$, where we dropped the $\omega$ subscript for simplicity. The $50\!:\!50$ beam splitter acts as the unitary operation \citep{art:Branczyk}
\begin{equation}
	\begin{cases}
		a^\dagger \to \frac{1}{\sqrt{2}}\left( a^\dagger + b^\dagger \right) \\
		b^\dagger \to \frac{1}{\sqrt{2}}\left( a^\dagger - b^\dagger \right)
	\end{cases} \ ,
	\label{eq:BeamSplitterOperation}
\end{equation} 
yielding
\begin{equation}
	\ket{\Psi_\I} \otimes \ket{\Psi_\U} \to \ket{\Phi} = \frac{1}{2} \int \dif^2k_1 \dif^2k_2 \ \hat{\Psi}(k_1,k_2) \left[ a^\dagger(k_1) + b^\dagger(k_1) \right]\left[ a^\dagger(k_2) - b^\dagger(k_2) \right] \ket{0} \ .
	\label{eq:PsiBipartiteOutput}
\end{equation}
Detection of mode $m \in \{a,b\}$ is described by the projector $\Pi_{m} = \int \dif^2k \ m^\dagger(k)\ket{0}\!\bra{0}m(k)$. The rate of coincidences, i.e. the probability that one and only one photon is detected in each mode, reads
\begin{gather}
	 p(1_a \cap 1_b) = \Tr[\ket{\Phi}\!\bra{\Phi}\Pi_a \otimes \Pi_b] \ , \\
	\text{with} \ \Pi_a \otimes \Pi_b = \int \dif^2k_3 \dif^2k_4 \ a^\dagger(k_3) b^\dagger(k_4) \ket{0}\!\bra{0} a(k_3) b(k_4) \ .
\end{gather}
By substitution of \cref{eq:PsiBipartiteOutput}, we get
\begin{equation}
	p(1_a \cap 1_b) = \frac{1}{4}\int \prod_{i=1}^6 \dif^2k_i \ \hat{\Psi}(k_1,k_2) \hat{\Psi}^*(k_5,k_6) W_1(k_1,k_2,k_3,k_4)W_2(k_3,k_4,k_5,k_6) \ , 
	\label{eq:CoincStep}
\end{equation}
where
\begin{gather}
	\begin{split}
		W_1(k_1,k_2,k_3,k_4) &= \bra{0} a(k_3) b(k_4)\left[a^\dagger(k_1)a^\dagger(k_2) - a^\dagger(k_1)b^\dagger(k_2) + b^\dagger(k_1)a^\dagger(k_2) - b^\dagger(k_1)b^\dagger(k_2) \right] \ket{0} \\
	&= \delta(k_2 - k_3)\delta(k_1 - k_4) - \delta(k_1 - k_3)\delta(k_2 - k_4) \ , 
	\end{split} \\
	\begin{split}
		W_2(k_3,k_4,k_5,k_6) &= \bra{0} \left[a(k_6)a(k_5) - b(k_6)a(k_5) + a(k_6)b(k_5) - b(k_6)b(k_5) \right] a^\dagger(k_3) b^\dagger(k_4) \ket{0} \\
	&= \delta(k_3 - k_6)\delta(k_4 - k_5) - \delta(k_3 - k_5)\delta(k_4 - k_6) \ .
	\end{split}
\end{gather}
By integrating out the Dirac deltas in \cref{eq:CoincStep}, we obtain
\begin{equation}
	p(1_a \cap 1_b) = \frac{1}{2}\int \dif^2k_1 \dif^2k_2 \dif^2k_5 \dif^2k_6 \ \hat{\Psi}(k_1,k_2) \hat{\Psi}^*(k_5,k_6) \left[\delta(k_1 - k_5)\delta(k_2-k_6) - \delta(k_1 - k_6)\delta(k_2 - k_5) \right] \ .
\end{equation}
Finally, the rate of coincidences reads
\begin{equation}
	p(1_a \cap 1_b|\lambda, \O) = \frac{1}{2} \int \dif^2k_1 \ |\hat{\I}(k_1 | \O)|^2 \int \dif^2k_2 \ |\hat{\U}(k_2|\lambda)|^2 - \frac{1}{2} \left| \int \dif^2k \ \hat{\I}(k | \O) \hat{\U}^*(k|\lambda) \right|^2 \ .
\end{equation}
More compactly, 
\begin{equation}
	 p(1_a \cap 1_b|\lambda, \O) = \frac{1}{2}\left[|| \I_\omega(\cdot|\O) ||^2 || \U_\omega(\cdot|\lambda) ||^2 -  \left| \langle \I_\omega(\cdot|\O),\U_\omega (\cdot|\lambda) \rangle \right|^2 \right] \ ,
\end{equation}
with $|| \cdot ||$ and $\langle \cdot , \cdot \rangle$ denoting the $L^2$-norm and inner product.

\section{TRAINING\label{app:Training}}
In this section, we discuss how to train the Hong-Ou-Mandel interferometer as a binary classifier. We separately feed each element of the training set (an ensemble of objects with known labels) into the input branch of the interferometer, comparing the predicted classes with the target ones. We optimize the probe parameters $\lambda$ by means of the gradient descent algorithm, and using the binary cross-entropy as loss function.

Consider a training set made of $M$ objects $\{\O_j\}$, each associated to a binary target label $y_j \in \{0,1\}$, with $0 \leq j \leq M - 1$. We denote $\f^{(j)}_\lambda = \f_\lambda(\O_j)$ our model prediction. After feeding $\O_j$ into the input branch of the interferometer
\begin{gather}
	 \f^{(j)}_\lambda = C - 2p(1_a \cap 1_b|\lambda, \O_j) \ , \\
	 \F_{b\lambda}^{(j)} = \sigma(\f^{(j)}_\lambda + b) \ ,
\end{gather}
where $p \in [0, 1/2]$. For simplicity, we assumed that the losses are independent on both the input and the probe, that is $C := \alpha_\lambda(\O_j) \ \forall \lambda, j$.

Given a sample object, the binary cross-entropy between the target label and the predicted one reads
\begin{equation}
	H\left(y_j,\F_{b\lambda}^{(j)}\right) = - y_j\log (\F_{b\lambda}^{(j)}) - \left(1-y_j\right)\log (1-\F_{b\lambda}^{(j)}) \ .
\end{equation} 
We optimize the probe parameters by means of the gradient descent algorithm, where the binary cross-entropy, averaged on the training set, is used as loss function. Namely
\begin{gather}
	\lambda \to \lambda - \frac{\eta_\lambda}{M} \sum_{j=0}^{M-1} \partial_\lambda H\left(y_{j},\F^{(j)}_{b\lambda}\right) \ , \\
	b \to b - \frac{\eta_b}{M} \sum_{j=0}^{M-1} \partial_b H\left(y_{j},\F^{(j)}_{b\lambda}\right) \ ,
\end{gather}
with $\eta_\lambda, \eta_b$ the learning rates of the probe and bias parameters, respectively. The derivatives with respect to the parameters and the bias yield
\begin{gather}
	\partial_\lambda H = \left(\partial_{\F} H \right)\left(\partial_{\xi} \sigma \right) \partial_\lambda \f \ , \\
	\partial_b H = \left(\partial_{\F} H \right)\partial_{\xi} \sigma \ , 
\end{gather}
with $\xi_{b\lambda} = \f_\lambda + b$. Then,
\begin{gather}
	\partial_{\F} H = \frac{\F-y}{\F(1-\F)} 
	\label{eq:EntropyDerivative} \ , \\
	\partial_{\xi}\sigma = \beta \F (1 - \F) \ ,
\end{gather}
with $\beta$ the hyperparameter of the sigmoid. For any complex function of real variable $h: \mathbb{R} \rightarrow \mathbb{C}$, it follows that $\partial_\lambda \left|h(\lambda)\right| = \Re\left[h(\lambda)(\partial_\lambda h(\lambda))^*\right]/\left|h(\lambda)\right|$. Hence, 
\begin{equation}
	\partial_\lambda \f = 2 \Re \left[ \langle \I_{\omega}, \U_{\omega} \rangle \langle \I_{\omega}, \partial_{\lambda}\U_{\omega} \rangle^* \right] \ .
	\label{eq:DerivativeCompleteApp}
\end{equation}	
Neglecting the phase of $\langle \I, \U \rangle$,
\begin{equation}
	\partial_\lambda \f \simeq 2 \sqrt{f} \Re \left[ \langle \I_{\omega}, \partial_{\lambda}\U_{\omega} \rangle \right] \ .
	\label{eq:DerivativeApproximation}
\end{equation}
This assumption, which we verified in our simulations under a self-consistency test, simplifies the computation of the first factor of \cref{eq:DerivativeCompleteApp}, which is directly determined at the output of the Hong-Ou-Mandel interferometer.

\section{CLASSIFICATION IN THE FOURIER DOMAIN\label{app:Fourier}}
In this section, we discuss the effect of adding a single lens in the probe branch of the Hong-Ou-Mandel interferometer. We summarize the main calculations.

A thin lens is placed at one focal length $\ell$ from both the probe image plane and the beam splitter. In the near-field limit, the lens performs a Fourier transform of the probe state \citep{art:Rezai}, yielding $\ket{\Psi_\U} \to \ket{\Psi_{\U'}}$, where
\begin{equation}
	\U'_\omega(r|\lambda) = -i \frac{\omega}{\ell} e^{2i \omega f} \hat{\U}_{\omega}\left(\frac{\omega}{\ell}r\big|\lambda\right) \ .
	\label{eq:Lens}
\end{equation}
After the beam splitter, the rate of coincidences is
\begin{gather}
    p(1_a \cap 1_b|\lambda, \O) = \frac{1}{2}\left[\alpha_\omega(\O) - \widetilde{\f}_\lambda(\O) \right] 
    \label{eq:BosonicCoincidencesFourier} \ , \\
	\widetilde{\f}_\lambda(\O) =  \left| \langle \I_\omega(\cdot|\O),\hat{\U}_\omega (\cdot|\lambda) \rangle \right|^2 \ ,
	\label{eq:NetworkBraketFourierInProbe}
\end{gather}
yielding
\begin{gather}
	\widetilde{\F}_{b\lambda}(\O) = \sigma(\widetilde{\f}_{\lambda}(\O) + b) \ , \label{eq:FinalOutputFourier} \\
	\widetilde{\f}_\lambda(\O) = \left| \int_I \dif r \dif r' \ \I_\omega(r|\O)\U_{\omega}^*(r'|\lambda)e^{i r\cdot r'} \right|^2 \ , \label{eq:NetworkFourier}	
\end{gather}
with $\sigma$ and $b$ the sigmoid activation function and bias, already introduced in the main discussion. However, $\widetilde{\f}_\lambda(\O)$ is not a point-wise evaluation: it combines the image spatial modes with the momentum spectrum of the probe state. Using the duality of the Fourier transform, it follows that
\begin{equation}
	\widetilde{\f}_\lambda(\O) =  \left| \langle \hat{\I}_\omega(\cdot|\O),\U_\omega (\cdot|\lambda) \rangle \right|^2 \ ,
	\label{eq:NetworkBraketFourierInImage}
\end{equation}
which corresponds to the output illustrated above, but with the thin lens placed in the left branch, before the beam splitter. Equivalently, this takes the Fourier transform of the image, instead of that of the probe. In the next section, we leverage this symmetry to simplify both the training process and the numerical simulations.

The training of the model follows the same procedure of the previous section. By placing the lens on the top branch of the interferometer, while using the duality of the Fourier transform, we get
\begin{equation}
	\partial_\lambda \widetilde{\f} \simeq 2 \sqrt{\widetilde{\f}} \Re \left[ \langle \hat{\I}_{\omega}, \partial_{\lambda}\U_{\omega} \rangle \right] \ .
	\label{eq:DerivativeApproximationFourier}
\end{equation}
Under the same conditions of the spatial domain, the last two equations become
\begin{gather}
	\widetilde{\f}_\lambda(\O) = \left| \sum_{\mu,\nu} (u \star \hat{\I}^*_{\omega})(r_{\mu\nu}) \frac{\lambda_{\mu \nu}}{||\lambda||} \right|^2 
	\label{eq:LCDLayerFourier} \ , \\
	\partial_{\mu \nu} \widetilde{\f} \simeq 2 \frac{\sqrt{\widetilde{\f}}}{||\lambda||}\Re\left[ (u \star  \hat{\I}^*_{\omega})(r_{\mu\nu}) - \sqrt{\widetilde{\f}} \frac{\lambda_{\mu \nu}}{||\lambda||} \right] \  ,
	\label{eq:LCDDerivativeFourier}
\end{gather}
where in the last step we neglected the phase of $\langle \hat{\I},\U \rangle$. Here, \cref{eq:LCDDerivativeFourier} can be evaluated in an all-optical way through the characterization of the real part of $\hat{\I}$, namely, by performing an amplitude and phase measurement at the output of a thin lens, placed in the left branch, before the beam splitter. In our simulations, we compare the predictability of the neuron in the spatial and Fourier domains.

\section{LITERATURE COMPARISON\label{app:LiteratureComparison}}
In the literature, the Hong-Ou-Mandel effect has been applied to quantum kernel evaluation \citep{art:Bowie}, bringing a high-quality and a valuable contribution to the quantum machine learning research. We highlight the unique advantages and differences between those results and our method.

\begin{enumerate}
	\item Both methods describe a learning model, and rely on the Hong-Ou-Mandel effect to compute its prediction. A major difference concerns the states fed into the branches of the interferometer. In \citep{art:Bowie}, a single branch contains both the data and the parameters, which have to be encoded by optical state preparation. Each branch corresponds to a data point, while the Hong-Ou-Mandel interferometer takes the scalar product between them. In our method, input and parameters are fed as two separate branches, i.e. the input branch and the probe branch. In this case, the purpose of the Hong-Ou-Mandel is twofold. On the one hand, it performs the same scalar product of \citep{art:Bowie}, evaluating the distance between the input and the probe states. On the other hand, it automatically injects the parameters into the input data, bypassing the cost of directly encoding them. For this reason, the input and the probe can be treated, and encoded, separately.
	\item The two methods have different computational meaning. Reference \citep{art:Bowie} is an optical implementation of a kernel method, whose purpose is to compare data by learning the optimal representation that separates them. Our proposal implements a single neuron (one branch for the input, the other one for the parameters), whose purpose is to learn ad adapt to the data. Our method is also prone to further generalization: it represents the building block towards a new class of optical neural networks with universal approximation capabilities.
	\item In \citep{art:Bowie}, data are represented with temporal modes. Instead, we focus on spatial modes, independently of the type of encoding. State preparation is not needed as both the input and the parameters are loaded through free space propagation. Additionally, we use the SLMs to describe an explicit example and implementation of our model, discussing its training with minimal optical operations.
	\item Reference \citep{art:Bowie} claims an exponential advantage, concerning the number of computational resources saved according to \citep{art:Mohseni}. Our claim is different. We demonstrate a superexponential speedup, for computational and optical resources. Our method can classify images of any resolution with constant number of photons and mathematical operations. Our conclusions are supported by a detailed analysis of the resource cost when attempting the same task by classical means.
\end{enumerate}

\section{OPTICAL AND COMPUTATIONAL ADVANTAGE\label{app:Advantage}}
In this section, we discuss the optical and computational advantage as the number of photons and operations required by a single image classification. Assuming that all the parameters have been previously trained with optimal accuracy, we show that our protocol requires a constant number of resources, i.e. $\mathcal{O}(1)$ complexity, independently of the input image resolution: it provides a superexponential speedup over its classical counterpart. 

We first discuss the computational advantage when substituting a classical neuron with a quantum optical one. From now on, we denote $\Omega$, $\Theta$ and $\mathcal{O}$, respectively the lower, tight and upper bounds on the number of resources needed by a certain (optical or computational) operation. Consider a digital image $x$ of $N$ pixels, fed into a neuron
\begin{equation}
	G_{bw}(x) = \sigma(w\cdot x + b) \ ,
	\label{eq:ComputationalNeuronApp}
\end{equation}
where $x, w \in \mathbb{R}^N$, $b \in \mathbb{R}$ and $\sigma$ is the sigmoid  activation, with hyperparameters $\beta = 1$ and $\gamma = 0$. \cref{eq:ComputationalNeuronApp} costs $N$ operations to compute $w \cdot x$. The Hong-Ou-Mandel interferometer performs the same operation in an all-optical way, leaving the computational cost of the activation function and bias only, which is $\mathcal{O}(1)$.

We now discuss the optical advantage when using coincidences to classify single-photon states instead of a classical neuron on fully reconstructed images. After targeting an object with light, a digital image $x$ is an ensemble of grey levels obtained by counting the number of photons collected by different pixels on a sensor grid, e.g. a charge-coupled device \citep{art:Boyle}. Let $n_{p}$ be the average number of photons in the input state, and $\mu_i$ the average number of photons collected by the $i$-th pixel of the grid, with $i \in \{0,\ldots,N\}$. Assuming perfect quantum efficiency and sufficiently low exposure times to neglect the saturation of the sensor, the grey values at each pixel read
\begin{equation}
	x_i =  \frac{\mu_iL}{\mu_w} \ ,
\end{equation}
with $L$ the number of grey levels, i.e. the depth of the image, and $\mu_w = \max_{i}\mu_i$ the maximum number of photons collected in a single pixel. Indeed, $x_i \in \{0, 1, \ldots, L-1\}$ with $0$ and $L-1$ labelling the black and white colors, respectively. Each pixel has variance $\varsigma_i^2 = \Delta \mu_i^2 L^2/\mu_w^2$, with $\Delta\mu_i^2$ the variance on the number of collected photons. For coherent light, the photo-detection process undergoes the standard quantum limit (SQL) \citep{art:Kolobov, art:Fabre}, with Poissonian fluctuations that satisfy $\Delta\mu_i^2 \simeq \mu_i$. The average uncertainty reads
\begin{equation}
	\varsigma^2 := \frac{1}{N}\sum_{i=0}^{N-1} \varsigma_i^2 \stackrel{\text{SQL}}{\simeq} \langle x \rangle^2  N n_{p}^{-1} \ ,
	\label{eq:AvgStdDev}
\end{equation}
with $\langle x \rangle = N^{-1}\sum_i x_i \in [0,L-1]$ the average brightness of the image, which we assume to be independent of its resolution. Hence, the number of photons $n_p$ required by a full image reconstruction with average variance $\varsigma^2$ is $\varsigma^{-2}\langle x \rangle^2 N$. By defining the signal-to-noise ratio as $\eta = \langle x \rangle / \varsigma$, the number of photons needed simplifies to $\eta N$. This discussion involves the cost of image reconstruction only. We now take into account the information propagation through the neuron of \cref{eq:ComputationalNeuronApp}.
\begin{proposition}
Consider a neuron with sigmoid activation function. Suppose that there exists a sequence of parameters $\{ (w_{N} , b_{N}) \in \mathbb{R}^{N+1} \}_{N \gg 1}$ that optimally solve the $N$-pixel image classification task, with $b_N$ and the $\ell^1$-norm $||w_N||_1$ asymptotically bounded for $N \to \infty$. Then, the number of photons $n_p$ required to classify an image $x$ with uncertainty $\varepsilon$, is $\Omega\left(\varepsilon^{-2}\langle x \rangle N\right)$.
\end{proposition}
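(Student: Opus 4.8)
The plan is to treat the neuron's output $G=\sigma(\xi)$, with pre-activation $\xi=w\cdot x+b$, as the measured quantity and to propagate the photon shot noise on the reconstructed grey levels $x_i$ through to $G$. The weights $w$ and bias $b$ are fixed after training, so the only stochastic input is $x$; since the per-pixel photon counts are independent, the variance of the pre-activation is $\Delta\xi^2=\sum_i w_i^2\varsigma_i^2$, and to first order $\Delta G=|\sigma'(\xi)|\,\Delta\xi$. Classifying with uncertainty $\varepsilon$ means $\Delta G\le\varepsilon$, which I will convert into a lower bound on the photon budget $n_p$.

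First I would insert the standard-quantum-limit per-pixel variance $\varsigma_i^2=x_iL/\mu_w$ (obtained above from $\Delta\mu_i^2\simeq\mu_i$ together with $x_i=\mu_iL/\mu_w$) and eliminate $\mu_w$ through the photon budget $n_p=\sum_i\mu_i=(\mu_w/L)\sum_i x_i=(\mu_w/L)N\langle x\rangle$. This gives $\Delta\xi^2=(N\langle x\rangle/n_p)\sum_i w_i^2 x_i$, so that the requirement $|\sigma'(\xi)|\,\Delta\xi\le\varepsilon$ rearranges into
\[
  n_p\;\ge\;\frac{N\langle x\rangle}{\varepsilon^2}\,|\sigma'(\xi)|^2\sum_i w_i^2 x_i .
\]
The $N\langle x\rangle$ prefactor is already the claimed scaling; everything then reduces to showing that the residual factor $|\sigma'(\xi)|^2\sum_i w_i^2 x_i$ is bounded below by a positive constant, uniformly in $N$.

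The sigmoid factor is the straightforward half. Because the grey levels are bounded, $0\le x_i\le L$, and because $||w_N||_1$ and $b_N$ are assumed bounded, the pre-activation stays in a fixed interval, $|\xi|\le ||w_N||_1\,L+|b_N|\le\Xi$ with $\Xi$ independent of $N$; since $\sigma'(\xi)=\sigma(\xi)\big(1-\sigma(\xi)\big)$ is continuous and strictly positive, it is bounded below by $\sigma'(\Xi)>0$ on $[-\Xi,\Xi]$. This is exactly the role played by the boundedness of the $\ell^1$ norm and the bias: it pins $\sigma'$ away from zero and prevents the prefactor from acquiring hidden $N$-dependence.

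The hard part will be the weighted signal $\sum_i w_i^2 x_i$, which must be shown not to dilute to zero as the resolution grows. This is where the hypothesis that $(w_N,b_N)$ optimally solves the task has to be used in an essential way: a bounded $\ell^1$ norm alone permits spreading the weights thinly over all $N$ pixels, in which case $\sum_i w_i^2 x_i$ could vanish like $1/N$ and the bound would collapse. I would argue that optimal separation of the two classes requires the induced gap in $\xi$ between them to stay above the noise floor $\Delta\xi$ as $N\to\infty$; holding this class-separation margin fixed forbids the weights from diluting and forces $\sum_i w_i^2 x_i\ge c>0$ with $c$ independent of $N$. Combining this with $\sigma'(\xi)\ge\sigma'(\Xi)$ then yields $n_p\ge\big(c\,\sigma'(\Xi)^2/\varepsilon^2\big)N\langle x\rangle=\Omega(\varepsilon^{-2}\langle x\rangle N)$, as claimed. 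Establishing this uniform-in-$N$ nonvanishing of the weighted signal from optimality, rather than merely from the stated norm bounds, is the step I expect to require the most care.
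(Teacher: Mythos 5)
Your proposal is correct and follows essentially the same route as the paper: propagate the shot noise $\varsigma_i^2 = x_i L/\mu_w$ through the pre-activation to get $\varepsilon^2 = |\sigma'(\xi)|^2\,(N\langle x\rangle/n_p)\sum_i w_i^2 x_i$ (you actually derive this propagation formula more explicitly than the paper, which simply states it), and then use the boundedness of $\|w_N\|_1$ and $b_N$ to confine $\xi = w_N\cdot x + b_N$ to a fixed interval so that $\sigma'(\xi)$ is bounded away from zero --- this is exactly the paper's argument via $|w_N\cdot x|\le L\|w_N\|_1$. The one place you genuinely diverge is the lower bound on the weighted signal $\sum_i w_i^2 x_i$: the paper drops the black pixels to get $\sum_i w_i^2 x_i \ge \|w_N\|_2^2 - \sum_{i\in\mathcal{B}} w_i^2$, asserts that the black-pixel contribution is negligible (else the neuron is not learnable), and then asserts that $\|w_N\|_2^2\not\to 0$ because the $w_N$ are ``non-trivial solutions''; you instead propose a class-separation-margin argument to forbid the $1/N$ dilution of the weights. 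You are right that this is the weak link --- a bounded $\ell^1$ norm is compatible with $\|w_N\|_2^2\sim 1/N$, so the norm hypotheses alone do not close the gap --- and the paper's treatment of it is no more rigorous than your sketch: both ultimately appeal informally to optimality/learnability. Your margin formulation is arguably the more natural way to make that appeal precise, but neither version is carried out in full.
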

\begin{proof}
Consider the output of the neuron $G_{bw}(x) = \sigma(w_{N}\cdot x + b_N)$, and its derivative $\partial G(x) = G_{bw}(x)(1-G_{bw}(x))$. By neglecting the spatial neighbourhood correlations, which may introduce at most a constant overhead in our estimation, we propagate the uncertainty of $x$ as
\begin{equation}
	\varepsilon^2 = \langle x \rangle (\partial G)^2(x) \sum_{i=0}^{N-1}(w_N)_i^2 x_i N\tilde{n}^{-1}_p \ ,
	\label{eq:StandardErrorClassification}
\end{equation}
where $\tilde{n}_p = n_r n_p$, with $n_r$ is the number of independent image acquisition and classification. Since black pixels do not contribute to this summation, we get 
\begin{equation}
	\sum_{i=0}^{N-1}(w_N)_i^2 x_i \geq \sum_{i\notin\mathcal{B}}(w_N)_i^2 = ||w_N||^2 - \sum_{i\in\mathcal{B}}(w_N)_i^2 \ ,
\end{equation}
with $\mathcal{B} = \{i\in \mathbb{N} \ | \ x_i = 0 \ \text{for} \ 0 \leq i \leq N - 1 \}$ the set of black pixels labels. However, $||w_N||^2 > \sum_{i\in\mathcal{B}}(w_N)_i^2$. Otherwise, $||w_N||^2 \simeq \sum_{i\in\mathcal{B}}(w_N)_i^2$ would imply either that the image is mostly black, independently of its resolution, or that $(w_N)_i \simeq 0$ for all non-black pixels, which are both conditions that prevent the learnability of the neuron. By substitution into \cref{eq:StandardErrorClassification} we get
\begin{equation}
	\tilde{n}_p \geq \varepsilon^{-2} \langle x \rangle (\partial G)^2(x) ||w_N||^2 N \ .
\end{equation}
Since $w_N$ is a sequence of non-trivial solutions of the classification problem, the $\ell^2$-norm $||w_N||^2$ cannot go to zero for $N \to \infty$. Finally, we show that $(\partial G(x))^2$ does not converge to $0$ for $N \to \infty$. Consider
\begin{equation}
	(\partial G)^2(x) = \frac{e^{-2(w_{N}\cdot x + b_N)}}{[1+e^{-(w_{N}\cdot x + b_N)}]^4} \ .
\end{equation}
If $b_{N}$ is asymptotically limited, $(\partial G)^2$ converges to zero if and only if $w_{N}\cdot x \to \pm \infty$. By splitting this scalar product into positive and negative contributions $w_{N}\cdot x = \sum_{(w_{N})_i > 0}(w_{N})_i x_i - \sum_{(w_{N})_i < 0}|(w_{N})_i| x_i$, it follows that
\begin{align}
	&w_{N}\cdot x \leq \sum_{(w_{N})_i > 0}(w_{N})_i x_i \leq L||w_N||_1 \ , \\
	&w_{N}\cdot x \geq - \sum_{(w_{N})_i < 0}|(w_{N})_i| x_i \geq - L||w_N||_1 \ ,
\end{align}
namely that $|w_{N}\cdot x| \leq L ||w_N||_1$. Since the $\ell^1$-norm is limited, $(\partial G)^2$ admits strictly positive lower bound for $N \to \infty$. Finally, this imply that $\tilde{n}_{p} = \Omega\left(\varepsilon^{-2}\langle x \rangle N\right)$.
\end{proof}
In the previous discussion, two conditions lead to the above lower bound. On the one hand, that $||w_N||^2 \not\to 0$ for $N \to \infty$, which is essential to guarantee that the neuron is trainable at any resolution. On the other hand, that $||w_N||_1$ is bounded for $N \to \infty$, which is compatible with LASSO and Tikhonov's regularization techniques \citep{art:Santosa,art:Tikhonov}.

Our protocol exponentially reduces this cost, by performing the measurement with two bucket detectors only, with no spatial resolution. The object classification depends on the estimation of the rate of coincidences of the Hong-Ou-Mandel interferometer. Let $\tilde{n}_p = 2n_p$ be the number input photons, and $\tilde{p} \in [0,1/2]$ the empirical rate of coincidences. Under the normal approximation, with the $95\%$ confidence level \citep{book:Rotondi}, the estimation uncertainty reads
\begin{equation}
	\varepsilon = 2 \sqrt{\frac{\tilde{p}(1-\tilde{p})}{\tilde{n}_p}} \ .
\end{equation}
Since $4 \tilde{p}(1-\tilde{p}) \leq 1$, the total number of photons is $\O(\varepsilon^{-2})$, which is constant with respect to the resolution of the image. In conclusion, the quantum optical neuron provides a superexponential advantage over its classical counterpart, both in the number of operations and photons saved to classify a single image. We summarize these results and the full discussion in Supplementary Table I.
\begin{table}[H]
	\centering
	\def\arraystretch{1.5}
	\setlength\tabcolsep{5pt}
	\begin{tabular}{|c|c|c|c|}
		\hline
		\multicolumn{2}{|c|}{Resources} & \multicolumn{1}{c|}{\parbox{3cm}{\centering \ \\[0.5pt] Quantum optical \\ neuron \\[4pt]}} & \parbox{3cm}{\centering \ \\[0.5pt] Imaging $\to$ Classical \\ artificial neuron \\[4pt]} \\ \hline
		\multicolumn{2}{|c|}{\parbox{3cm}{\centering \ \\[0.5pt] Computational \\ (\# of mathematical operations) \\[3pt]}} & $\mathcal{O}(1)$ & $N$ \\ \hline 
 		\multirow{2}{*}{\parbox{2.25cm}{\centering Optical \\ (\# of photons)}} & Imaging & \parbox{3cm}{\centering \ \\[0.5pt] The image is not \\ reconstructed \\[4pt]} & $\varsigma^{-2}\langle x \rangle^2 N$ \\ \cline{2-4}
 		& Classification & $\mathcal{O}(\varepsilon^{-2})$ & $\Omega(\varepsilon^{-2}\langle x \rangle N)$ \\ \hline
	\end{tabular}
	\caption{\label{tab:FullResourceCost}Extended comparison of the computational and optical resources needed to classify an image $x$ of $N$ pixels. (Computational) An artificial neuron requires $N$ mathematical operations to inject the parameters. Our method performs the same task by optical means, through a single Hong-Ou-Mandel interferometer. No computation is required other than scalar post-processing. (Optical) In coherent imaging, the number of photons needed depends linearly on the output resolution, i.e. $\varsigma^{-2}\langle x \rangle^2 N$. Here, $\varsigma$ and $\langle x \rangle $ are the standard deviation and the average brightness of the reconstructed image (which depends on the reflectivity of the object). The same also holds for the artificial neuron, after which the cost increases to $\Omega(\varepsilon^{-2}\langle x \rangle N)$, where $\varepsilon$ is the uncertainty on the classification outcome, i.e. the error on the predicted class. Our method achieves the same task and uncertainty, with number of photons dictated only by the binomial statistics, independently of the resolution. In both cases, the quantum optical neuron requires constant resources, achieving a superexponential speedup.}
\end{table}

We conclude this section by comparing our method against a classical optical design. We replace the artificial neuron with a Mach-Zehnder interferometer, which measures the overlap between the input and the probe states using classical, e.g. coherent, light. Consider a multimode coherent state $\ket{\xi_\phi} = D(\xi,\phi)\ket{0}$, with intensity $|\xi_\phi|^2$ and spatial spectrum $\phi$. Here 
\begin{gather}
	D(\xi,\phi) = \exp(\xi A^\dagger_\phi - \xi^*A_\phi) \ , \\
	A^\dagger_\phi = \int \dif^2k \ \hat{\phi}(k) a^\dagger_\omega(k) \ , \label{eq:MultimodeCreationOperator}
\end{gather}
are the multimode displacement and creation operators, respectively, and $\exp$ denotes the exponential map. A $50\!:\!50$ beam splitter separates the source into two multimodes, $A$ and $B$. Using \cref{eq:BeamSplitterOperation}, we get
\begin{equation}
	\ket{\xi_\phi} \to \exp[\frac{1}{\sqrt{2}}\left(\xi A^\dagger_\phi - \xi^*A_\phi\right) + \frac{1}{\sqrt{2}}\left(\xi B^\dagger_\phi - \xi^*B_\phi\right) ] \ket{0} = \ket{\frac{\xi_\phi}{\sqrt{2}}}  \ket{\frac{\xi_\phi}{\sqrt{2}}} ,
	\label{eq:CoherentSplitting}
\end{equation} 
where $B$ undergoes the same definition of \cref{eq:MultimodeCreationOperator}, with respect to mode $b$. In this case, the beam splitter halves the source intensity across the output branches. Consider two SLMs, each with $N$ pixels, encoding the input image and the trainable parameters. In the simplest scenario, each pixel can be associated to a single spatial mode, by placing a lens at each branch of the interferometer. At one focal length, this performs a Fourier transform, which, for finite number of modes, reads as the unitary transformation 
\begin{equation}
	U = \frac{1}{\sqrt{N}}\sum_{ij}\zeta_{ij} e_i \otimes e_j \ ,
\end{equation}
with $|\zeta{ij}|^2 = 1 \ \forall i,j$, and $e_i \otimes e_j$ labelling the $ij$-th component of $U$ \citep{book:Nielsen}. Similarly to \cref{eq:BeamSplitterOperation,eq:CoherentSplitting}, the multimode $A$ transforms under the action of $U$, splitting each branch in $N$ spatial modes $\{B_1,\ldots,B_{N}\}$ and yielding
\begin{equation}
	\ket{\frac{\xi_\phi}{\sqrt{2}}} \to \ket{\frac{\xi_\phi}{\sqrt{2N}}}\ldots \ket{\frac{\xi_\phi}{\sqrt{2N}}} \ .
	\label{eq:CoherentFactorization}
\end{equation}
Encoding the SLM, i.e. having at least one photon for each pixel, requires a coherent state with $|\xi_\phi|^2 > 2N$, namely an average number of photons that scales as $\Omega(N)$. This means that a classical interferometer would bring a computational speedup without optical advantage, undergoing the same cost of image detection and subsequent classification. In our method, single-photon states do not factorize as in \cref{eq:CoherentFactorization}. They use entanglement to encode the SLM in a superposition of $N$ spatial mode, which, combined with the coincidence measurement, leads to the $\mathcal{O}(1)$ scaling.

\begin{figure}[H]
	\centering
	\includegraphics[width = 1 \textwidth]{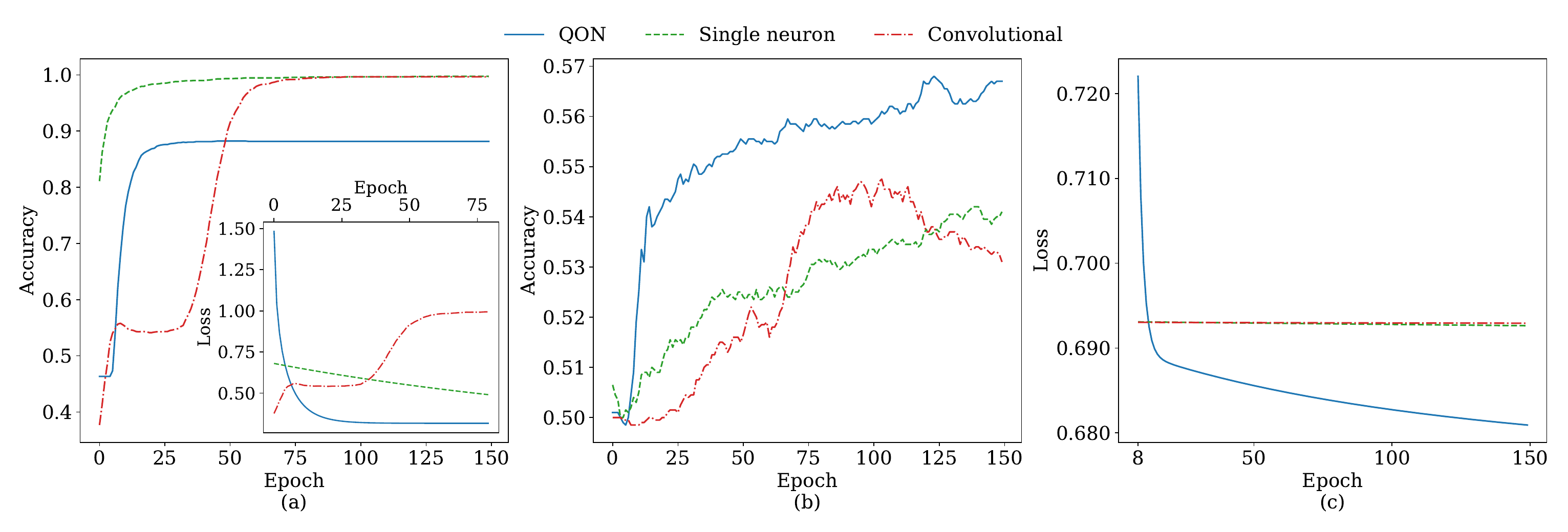}%
	\caption{\label{fig:SquareModulus}Quantum optical neuron with square absolute value activation function. Comparison between the quantum optical neuron (QON), both in the spatial (blue solid line) and in the Fourier (orange dotted line) domains, a classical artificial neuron (green dashed line) and a convolutional network (red dash-dotted line). All models are trained with the same number of $\sim 1024$ parameters, optimizer and $\eta_\lambda = 0.075$ and $\eta_b = 0.005$.  (a) History plot for the MNIST dataset, when classifying zeros and ones. Our method reaches an asymptotic accuracy near $88\%$. The inset shows the binary cross-entropy. (b-c) Accuracy and loss histories for the CIFAR-10 dataset, when classifying cats and dogs. Our method reaches an asymptotic accuracy near $57\%$.}
\end{figure}
\begin{figure}
	\centering
	\includegraphics[width = 1 \textwidth]{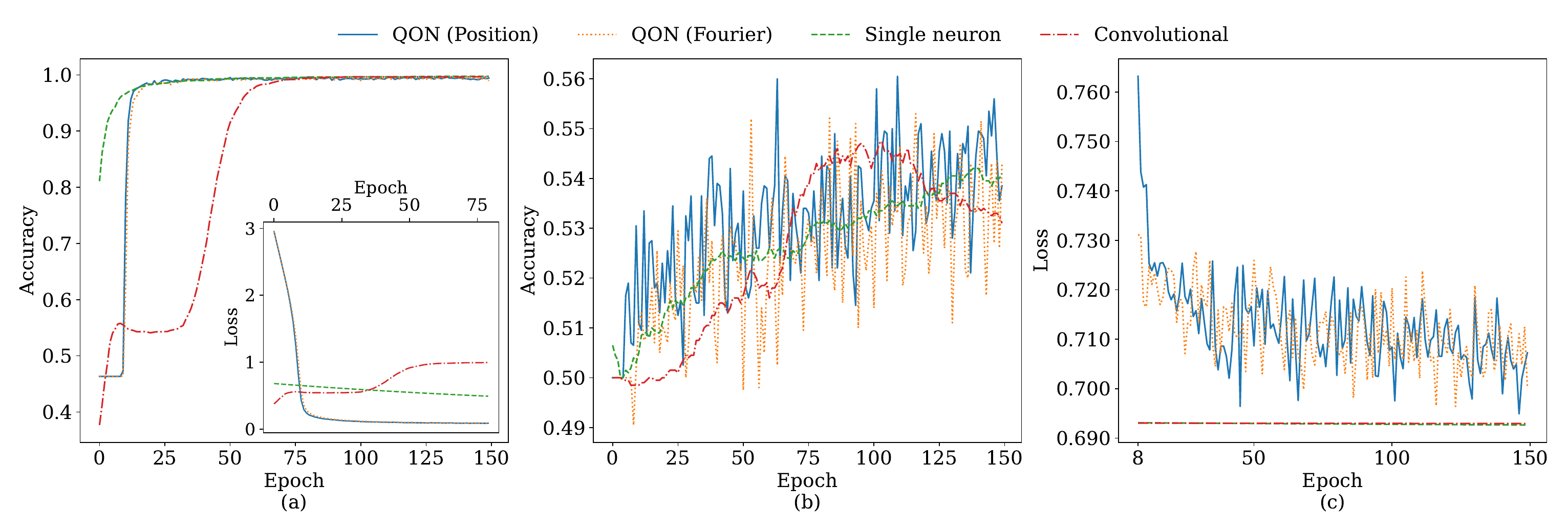}
	
	\includegraphics[width = 1 \textwidth]{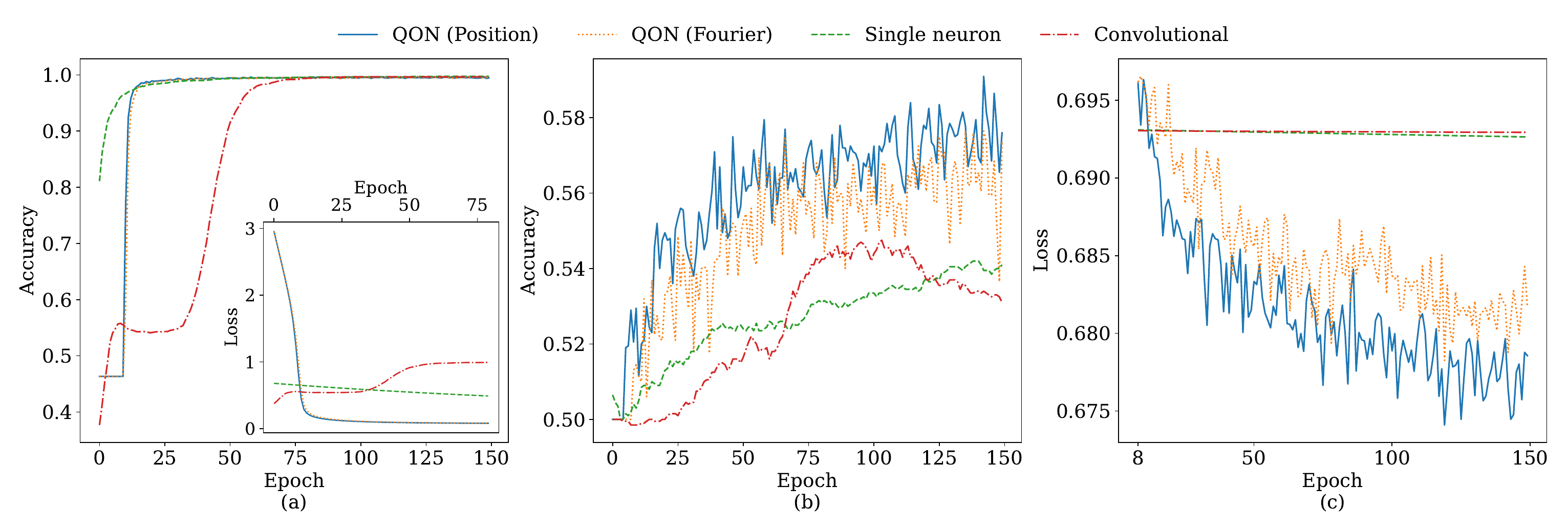}%
	\caption{\label{fig:Noise}Quantum optical neuron with coincidences approximated by empirical frequencies. Comparison between the quantum optical neuron (QON), both in the spatial (blue solid line) and in the Fourier (orange dotted line) domains, a classical artificial neuron (green dashed line) and a convolutional network (red dash-dotted line). All models are trained with the same number of $\sim 1024$ parameters, optimizer and $\eta_\lambda = 0.075$ and $\eta_b = 0.005$. Output sampled with $100$ (upper plots) and $1000$ (lower plots) repetitions. (a) History plot for the MNIST dataset, when classifying zeros and ones. The inset shows the binary cross-entropy. Our method demonstrates high noise resiliency. (b-c) Accuracy and loss histories for the CIFAR-10 dataset, when classifying cats and dogs. Although with lower resiliency, sampling fluctuations does not affect the asymptotic convergence of training.}
\end{figure}

\end{widetext}

\clearpage
\FloatBarrier
\bibliography{refs.bib}

\begin{thebibliography}{49}%
\makeatletter
\providecommand \@ifxundefined [1]{%
 \@ifx{#1\undefined}
}%
\providecommand \@ifnum [1]{%
 \ifnum #1\expandafter \@firstoftwo
 \else \expandafter \@secondoftwo
 \fi
}%
\providecommand \@ifx [1]{%
 \ifx #1\expandafter \@firstoftwo
 \else \expandafter \@secondoftwo
 \fi
}%
\providecommand \natexlab [1]{#1}%
\providecommand \enquote  [1]{``#1''}%
\providecommand \bibnamefont  [1]{#1}%
\providecommand \bibfnamefont [1]{#1}%
\providecommand \citenamefont [1]{#1}%
\providecommand \href@noop [0]{\@secondoftwo}%
\providecommand \href [0]{\begingroup \@sanitize@url \@href}%
\providecommand \@href[1]{\@@startlink{#1}\@@href}%
\providecommand \@@href[1]{\endgroup#1\@@endlink}%
\providecommand \@sanitize@url [0]{\catcode `\\12\catcode `\$12\catcode
  `\&12\catcode `\#12\catcode `\^12\catcode `\_12\catcode `\%12\relax}%
\providecommand \@@startlink[1]{}%
\providecommand \@@endlink[0]{}%
\providecommand \url  [0]{\begingroup\@sanitize@url \@url }%
\providecommand \@url [1]{\endgroup\@href {#1}{\urlprefix }}%
\providecommand \urlprefix  [0]{URL }%
\providecommand \Eprint [0]{\href }%
\providecommand \doibase [0]{https://doi.org/}%
\providecommand \selectlanguage [0]{\@gobble}%
\providecommand \bibinfo  [0]{\@secondoftwo}%
\providecommand \bibfield  [0]{\@secondoftwo}%
\providecommand \translation [1]{[#1]}%
\providecommand \BibitemOpen [0]{}%
\providecommand \bibitemStop [0]{}%
\providecommand \bibitemNoStop [0]{.\EOS\space}%
\providecommand \EOS [0]{\spacefactor3000\relax}%
\providecommand \BibitemShut  [1]{\csname bibitem#1\endcsname}%
\let\auto@bib@innerbib\@empty
\bibitem [{\citenamefont {Lecun}\ \emph {et~al.}(1998)\citenamefont {Lecun},
  \citenamefont {Bottou}, \citenamefont {Bengio},\ and\ \citenamefont
  {Haffner}}]{art:LeNet}%
  \BibitemOpen
  \bibfield  {author} {\bibinfo {author} {\bibfnamefont {Y.}~\bibnamefont
  {Lecun}}, \bibinfo {author} {\bibfnamefont {L.}~\bibnamefont {Bottou}},
  \bibinfo {author} {\bibfnamefont {Y.}~\bibnamefont {Bengio}},\ and\ \bibinfo
  {author} {\bibfnamefont {P.}~\bibnamefont {Haffner}},\ }\bibfield  {title}
  {\bibinfo {title} {Gradient-based learning applied to document recognition},\
  }\href {https://doi.org/10.1109/5.726791} {\bibfield  {journal} {\bibinfo
  {journal} {Proc. IEEE}\ }\textbf {\bibinfo {volume} {86}},\ \bibinfo {pages}
  {2278} (\bibinfo {year} {1998})}\BibitemShut {NoStop}%
\bibitem [{\citenamefont {Krizhevsky}\ \emph {et~al.}(2017)\citenamefont
  {Krizhevsky}, \citenamefont {Sutskever},\ and\ \citenamefont
  {Hinton}}]{art:AlexNet}%
  \BibitemOpen
  \bibfield  {author} {\bibinfo {author} {\bibfnamefont {A.}~\bibnamefont
  {Krizhevsky}}, \bibinfo {author} {\bibfnamefont {I.}~\bibnamefont
  {Sutskever}},\ and\ \bibinfo {author} {\bibfnamefont {G.~E.}\ \bibnamefont
  {Hinton}},\ }\bibfield  {title} {\bibinfo {title} {Image{N}et classification
  with deep convolutional neural networks},\ }\href
  {https://doi.org/10.1145/3065386} {\bibfield  {journal} {\bibinfo  {journal}
  {Commun. ACM}\ }\textbf {\bibinfo {volume} {60}},\ \bibinfo {pages} {84–90}
  (\bibinfo {year} {2017})}\BibitemShut {NoStop}%
\bibitem [{\citenamefont {He}\ \emph {et~al.}()\citenamefont {He},
  \citenamefont {Zhang}, \citenamefont {Ren},\ and\ \citenamefont
  {Sun}}]{art:ResNet}%
  \BibitemOpen
  \bibfield  {author} {\bibinfo {author} {\bibfnamefont {K.}~\bibnamefont
  {He}}, \bibinfo {author} {\bibfnamefont {X.}~\bibnamefont {Zhang}}, \bibinfo
  {author} {\bibfnamefont {S.}~\bibnamefont {Ren}},\ and\ \bibinfo {author}
  {\bibfnamefont {J.}~\bibnamefont {Sun}},\ }\bibfield  {title} {\bibinfo
  {title} {Deep residual learning for image recognition},\ }in\ \href
  {https://doi.org/10.1109/CVPR.2016.90} {\emph {\bibinfo {booktitle} {IEEE
  Conference on Computer Vision and Pattern Recognition}}},\ \bibinfo {series
  and number} {CVPR '16},\ p.\ \bibinfo {pages} {770}\BibitemShut {NoStop}%
\bibitem [{\citenamefont {Dosovitskiy}\ \emph {et~al.}()\citenamefont
  {Dosovitskiy}, \citenamefont {Beyer}, \citenamefont {Kolesnikov},
  \citenamefont {Weissenborn}, \citenamefont {Zhai}, \citenamefont
  {Unterthiner}, \citenamefont {Dehghani}, \citenamefont {Minderer},
  \citenamefont {Heigold}, \citenamefont {Gelly}, \citenamefont {Uszkoreit},\
  and\ \citenamefont {Houlsby}}]{art:ViT}%
  \BibitemOpen
  \bibfield  {author} {\bibinfo {author} {\bibfnamefont {A.}~\bibnamefont
  {Dosovitskiy}}, \bibinfo {author} {\bibfnamefont {L.}~\bibnamefont {Beyer}},
  \bibinfo {author} {\bibfnamefont {A.}~\bibnamefont {Kolesnikov}}, \bibinfo
  {author} {\bibfnamefont {D.}~\bibnamefont {Weissenborn}}, \bibinfo {author}
  {\bibfnamefont {X.}~\bibnamefont {Zhai}}, \bibinfo {author} {\bibfnamefont
  {T.}~\bibnamefont {Unterthiner}}, \bibinfo {author} {\bibfnamefont
  {M.}~\bibnamefont {Dehghani}}, \bibinfo {author} {\bibfnamefont
  {M.}~\bibnamefont {Minderer}}, \bibinfo {author} {\bibfnamefont
  {G.}~\bibnamefont {Heigold}}, \bibinfo {author} {\bibfnamefont
  {S.}~\bibnamefont {Gelly}}, \bibinfo {author} {\bibfnamefont
  {J.}~\bibnamefont {Uszkoreit}},\ and\ \bibinfo {author} {\bibfnamefont
  {N.}~\bibnamefont {Houlsby}},\ }\bibfield  {title} {\bibinfo {title} {An
  image is worth 16x16 words: Transformers for image recognition at scale},\
  }in\ \href@noop {} {\emph {\bibinfo {booktitle} {International Conference on
  Learning Representations}}},\ \bibinfo {series and number} {ICLR '21},\
  \Eprint {https://arxiv.org/abs/2010.11929} {arXiv:2010.11929 [cs.CV]}
  \BibitemShut {NoStop}%
\bibitem [{\citenamefont {Rosenblatt}(1958)}]{art:Rosenblatt}%
  \BibitemOpen
  \bibfield  {author} {\bibinfo {author} {\bibfnamefont {F.}~\bibnamefont
  {Rosenblatt}},\ }\bibfield  {title} {\bibinfo {title} {The perceptron: {A}
  probabilistic model for information storage and organization in the brain},\
  }\href {https://doi.org/10.1037/h0042519} {\bibfield  {journal} {\bibinfo
  {journal} {Psychol. Rev.}\ }\textbf {\bibinfo {volume} {65}},\ \bibinfo
  {pages} {386} (\bibinfo {year} {1958})}\BibitemShut {NoStop}%
\bibitem [{\citenamefont {Goodfellow}\ \emph {et~al.}(2016)\citenamefont
  {Goodfellow}, \citenamefont {Bengio},\ and\ \citenamefont
  {Courville}}]{book:Goodfellow}%
  \BibitemOpen
  \bibfield  {author} {\bibinfo {author} {\bibfnamefont {I.}~\bibnamefont
  {Goodfellow}}, \bibinfo {author} {\bibfnamefont {Y.}~\bibnamefont {Bengio}},\
  and\ \bibinfo {author} {\bibfnamefont {A.}~\bibnamefont {Courville}},\
  }\href@noop {} {\emph {\bibinfo {title} {Deep Learning}}}\ (\bibinfo
  {publisher} {MIT Press},\ \bibinfo {year} {2016})\BibitemShut {NoStop}%
\bibitem [{\citenamefont {Shastri}\ \emph {et~al.}(2021)\citenamefont
  {Shastri}, \citenamefont {Tait}, \citenamefont {Ferreira~de Lima},
  \citenamefont {Pernice}, \citenamefont {Bhaskaran}, \citenamefont {Wright},\
  and\ \citenamefont {Prucnal}}]{art:Shastri}%
  \BibitemOpen
  \bibfield  {author} {\bibinfo {author} {\bibfnamefont {B.~J.}\ \bibnamefont
  {Shastri}}, \bibinfo {author} {\bibfnamefont {A.~N.}\ \bibnamefont {Tait}},
  \bibinfo {author} {\bibfnamefont {T.}~\bibnamefont {Ferreira~de Lima}},
  \bibinfo {author} {\bibfnamefont {W.~H.~P.}\ \bibnamefont {Pernice}},
  \bibinfo {author} {\bibfnamefont {H.}~\bibnamefont {Bhaskaran}}, \bibinfo
  {author} {\bibfnamefont {C.~D.}\ \bibnamefont {Wright}},\ and\ \bibinfo
  {author} {\bibfnamefont {P.~R.}\ \bibnamefont {Prucnal}},\ }\bibfield
  {title} {\bibinfo {title} {Photonics for artificial intelligence and
  neuromorphic computing},\ }\href {https://doi.org/10.1038/s41566-020-00754-y}
  {\bibfield  {journal} {\bibinfo  {journal} {Nat. Photon.}\ }\textbf {\bibinfo
  {volume} {15}},\ \bibinfo {pages} {102–114} (\bibinfo {year}
  {2021})}\BibitemShut {NoStop}%
\bibitem [{\citenamefont {Lin}\ \emph {et~al.}(2018)\citenamefont {Lin},
  \citenamefont {Rivenson}, \citenamefont {Yardimci}, \citenamefont {Veli},
  \citenamefont {Luo}, \citenamefont {Jarrahi},\ and\ \citenamefont
  {Ozcan}}]{art:Lin}%
  \BibitemOpen
  \bibfield  {author} {\bibinfo {author} {\bibfnamefont {X.}~\bibnamefont
  {Lin}}, \bibinfo {author} {\bibfnamefont {Y.}~\bibnamefont {Rivenson}},
  \bibinfo {author} {\bibfnamefont {N.~T.}\ \bibnamefont {Yardimci}}, \bibinfo
  {author} {\bibfnamefont {M.}~\bibnamefont {Veli}}, \bibinfo {author}
  {\bibfnamefont {Y.}~\bibnamefont {Luo}}, \bibinfo {author} {\bibfnamefont
  {M.}~\bibnamefont {Jarrahi}},\ and\ \bibinfo {author} {\bibfnamefont
  {A.}~\bibnamefont {Ozcan}},\ }\bibfield  {title} {\bibinfo {title}
  {All-optical machine learning using diffractive deep neural networks},\
  }\href {https://doi.org/10.1126/science.aat8084} {\bibfield  {journal}
  {\bibinfo  {journal} {Science}\ }\textbf {\bibinfo {volume} {361}},\ \bibinfo
  {pages} {1004–1008} (\bibinfo {year} {2018})}\BibitemShut {NoStop}%
\bibitem [{\citenamefont {Zuo}\ \emph {et~al.}(2019)\citenamefont {Zuo},
  \citenamefont {Li}, \citenamefont {Zhao}, \citenamefont {Jiang},
  \citenamefont {Chen}, \citenamefont {Chen}, \citenamefont {Jo}, \citenamefont
  {Liu},\ and\ \citenamefont {Du}}]{art:Zuo}%
  \BibitemOpen
  \bibfield  {author} {\bibinfo {author} {\bibfnamefont {Y.}~\bibnamefont
  {Zuo}}, \bibinfo {author} {\bibfnamefont {B.}~\bibnamefont {Li}}, \bibinfo
  {author} {\bibfnamefont {Y.}~\bibnamefont {Zhao}}, \bibinfo {author}
  {\bibfnamefont {Y.}~\bibnamefont {Jiang}}, \bibinfo {author} {\bibfnamefont
  {Y.-C.}\ \bibnamefont {Chen}}, \bibinfo {author} {\bibfnamefont
  {P.}~\bibnamefont {Chen}}, \bibinfo {author} {\bibfnamefont {G.-B.}\
  \bibnamefont {Jo}}, \bibinfo {author} {\bibfnamefont {J.}~\bibnamefont
  {Liu}},\ and\ \bibinfo {author} {\bibfnamefont {S.}~\bibnamefont {Du}},\
  }\bibfield  {title} {\bibinfo {title} {All-optical neural network with
  nonlinear activation functions},\ }\href
  {https://doi.org/10.1364/OPTICA.6.001132} {\bibfield  {journal} {\bibinfo
  {journal} {Optica}\ }\textbf {\bibinfo {volume} {6}},\ \bibinfo {pages}
  {1132} (\bibinfo {year} {2019})}\BibitemShut {NoStop}%
\bibitem [{\citenamefont {Colburn}\ \emph {et~al.}(2019)\citenamefont
  {Colburn}, \citenamefont {Chu}, \citenamefont {Shilzerman},\ and\
  \citenamefont {Majumdar}}]{art:Colburn}%
  \BibitemOpen
  \bibfield  {author} {\bibinfo {author} {\bibfnamefont {S.}~\bibnamefont
  {Colburn}}, \bibinfo {author} {\bibfnamefont {Y.}~\bibnamefont {Chu}},
  \bibinfo {author} {\bibfnamefont {E.}~\bibnamefont {Shilzerman}},\ and\
  \bibinfo {author} {\bibfnamefont {A.}~\bibnamefont {Majumdar}},\ }\bibfield
  {title} {\bibinfo {title} {Optical frontend for a convolutional neural
  network},\ }\href {https://doi.org/10.1364/AO.58.003179} {\bibfield
  {journal} {\bibinfo  {journal} {Appl. Opt.}\ }\textbf {\bibinfo {volume}
  {58}},\ \bibinfo {pages} {3179} (\bibinfo {year} {2019})}\BibitemShut
  {NoStop}%
\bibitem [{\citenamefont {Li}\ \emph {et~al.}(2021)\citenamefont {Li},
  \citenamefont {Ni}, \citenamefont {Feng}, \citenamefont {Cui}, \citenamefont
  {Liu}, \citenamefont {Zhang},\ and\ \citenamefont {Huang}}]{art:Li}%
  \BibitemOpen
  \bibfield  {author} {\bibinfo {author} {\bibfnamefont {S.}~\bibnamefont
  {Li}}, \bibinfo {author} {\bibfnamefont {B.}~\bibnamefont {Ni}}, \bibinfo
  {author} {\bibfnamefont {X.}~\bibnamefont {Feng}}, \bibinfo {author}
  {\bibfnamefont {K.}~\bibnamefont {Cui}}, \bibinfo {author} {\bibfnamefont
  {F.}~\bibnamefont {Liu}}, \bibinfo {author} {\bibfnamefont {W.}~\bibnamefont
  {Zhang}},\ and\ \bibinfo {author} {\bibfnamefont {Y.}~\bibnamefont {Huang}},\
  }\bibfield  {title} {\bibinfo {title} {All-optical image identification with
  programmable matrix transformation},\ }\href
  {https://doi.org/10.1364/OE.430281} {\bibfield  {journal} {\bibinfo
  {journal} {Opt. Express}\ }\textbf {\bibinfo {volume} {29}},\ \bibinfo
  {pages} {26474} (\bibinfo {year} {2021})}\BibitemShut {NoStop}%
\bibitem [{\citenamefont {Luo}\ \emph {et~al.}(2022)\citenamefont {Luo},
  \citenamefont {Zhao}, \citenamefont {Li}, \citenamefont {Çetintaş},
  \citenamefont {Rivenson}, \citenamefont {Jarrahi},\ and\ \citenamefont
  {Ozcan}}]{art:Luo}%
  \BibitemOpen
  \bibfield  {author} {\bibinfo {author} {\bibfnamefont {Y.}~\bibnamefont
  {Luo}}, \bibinfo {author} {\bibfnamefont {Y.}~\bibnamefont {Zhao}}, \bibinfo
  {author} {\bibfnamefont {J.}~\bibnamefont {Li}}, \bibinfo {author}
  {\bibfnamefont {E.}~\bibnamefont {Çetintaş}}, \bibinfo {author}
  {\bibfnamefont {Y.}~\bibnamefont {Rivenson}}, \bibinfo {author}
  {\bibfnamefont {M.}~\bibnamefont {Jarrahi}},\ and\ \bibinfo {author}
  {\bibfnamefont {A.}~\bibnamefont {Ozcan}},\ }\bibfield  {title} {\bibinfo
  {title} {Computational imaging without a computer: seeing through random
  diffusers at the speed of light},\ }\href
  {https://doi.org/10.1186/s43593-022-00012-4} {\bibfield  {journal} {\bibinfo
  {journal} {eLight}\ }\textbf {\bibinfo {volume} {2}},\ \bibinfo {pages} {4}
  (\bibinfo {year} {2022})}\BibitemShut {NoStop}%
\bibitem [{\citenamefont {McMahon}(2023)}]{art:McMahon}%
  \BibitemOpen
  \bibfield  {author} {\bibinfo {author} {\bibfnamefont {P.~L.}\ \bibnamefont
  {McMahon}},\ }\bibfield  {title} {\bibinfo {title} {The physics of optical
  computing},\ }\href {https://doi.org/10.1038/s42254-023-00645-5} {\bibfield
  {journal} {\bibinfo  {journal} {Nat. Rev. Phys.}\ }\textbf {\bibinfo {volume}
  {5}},\ \bibinfo {pages} {717–734} (\bibinfo {year} {2023})}\BibitemShut
  {NoStop}%
\bibitem [{\citenamefont {Lloyd}\ \emph {et~al.}(2013)\citenamefont {Lloyd},
  \citenamefont {Mohseni},\ and\ \citenamefont {Rebentrost}}]{art:Mohseni}%
  \BibitemOpen
  \bibfield  {author} {\bibinfo {author} {\bibfnamefont {S.}~\bibnamefont
  {Lloyd}}, \bibinfo {author} {\bibfnamefont {M.}~\bibnamefont {Mohseni}},\
  and\ \bibinfo {author} {\bibfnamefont {P.}~\bibnamefont {Rebentrost}},\
  }\href@noop {} {\bibinfo {title} {Quantum algorithms for supervised and
  unsupervised machine learning}} (\bibinfo {year} {2013}),\ \Eprint
  {https://arxiv.org/abs/1307.0411} {arXiv:1307.0411 [quant-ph]} \BibitemShut
  {NoStop}%
\bibitem [{\citenamefont {Cai}\ \emph {et~al.}(2015)\citenamefont {Cai},
  \citenamefont {Wu}, \citenamefont {Su}, \citenamefont {Chen}, \citenamefont
  {Wang}, \citenamefont {Li}, \citenamefont {Liu}, \citenamefont {Lu},\ and\
  \citenamefont {Pan}}]{art:Cai}%
  \BibitemOpen
  \bibfield  {author} {\bibinfo {author} {\bibfnamefont {X.-D.}\ \bibnamefont
  {Cai}}, \bibinfo {author} {\bibfnamefont {D.}~\bibnamefont {Wu}}, \bibinfo
  {author} {\bibfnamefont {Z.-E.}\ \bibnamefont {Su}}, \bibinfo {author}
  {\bibfnamefont {M.-C.}\ \bibnamefont {Chen}}, \bibinfo {author}
  {\bibfnamefont {X.-L.}\ \bibnamefont {Wang}}, \bibinfo {author}
  {\bibfnamefont {L.}~\bibnamefont {Li}}, \bibinfo {author} {\bibfnamefont
  {N.-L.}\ \bibnamefont {Liu}}, \bibinfo {author} {\bibfnamefont {C.-Y.}\
  \bibnamefont {Lu}},\ and\ \bibinfo {author} {\bibfnamefont {J.-W.}\
  \bibnamefont {Pan}},\ }\bibfield  {title} {\bibinfo {title}
  {Entanglement-based machine learning on a quantum computer},\ }\href
  {https://doi.org/10.1103/PhysRevLett.114.110504} {\bibfield  {journal}
  {\bibinfo  {journal} {Phys. Rev. Lett.}\ }\textbf {\bibinfo {volume} {114}},\
  \bibinfo {pages} {110504} (\bibinfo {year} {2015})}\BibitemShut {NoStop}%
\bibitem [{\citenamefont {Benatti}\ \emph {et~al.}(2019)\citenamefont
  {Benatti}, \citenamefont {Mancini},\ and\ \citenamefont
  {Mangini}}]{art:Mangini}%
  \BibitemOpen
  \bibfield  {author} {\bibinfo {author} {\bibfnamefont {F.}~\bibnamefont
  {Benatti}}, \bibinfo {author} {\bibfnamefont {S.}~\bibnamefont {Mancini}},\
  and\ \bibinfo {author} {\bibfnamefont {S.}~\bibnamefont {Mangini}},\
  }\bibfield  {title} {\bibinfo {title} {Continuous variable quantum
  perceptron},\ }\href {https://doi.org/10.1142/s0219749919410090} {\bibfield
  {journal} {\bibinfo  {journal} {Int. J. Quantum Inf.}\ }\textbf {\bibinfo
  {volume} {17}},\ \bibinfo {pages} {1941009} (\bibinfo {year}
  {2019})}\BibitemShut {NoStop}%
\bibitem [{\citenamefont {Tacchino}\ \emph {et~al.}(2019)\citenamefont
  {Tacchino}, \citenamefont {Macchiavello}, \citenamefont {Gerace},\ and\
  \citenamefont {Bajoni}}]{art:Tacchino}%
  \BibitemOpen
  \bibfield  {author} {\bibinfo {author} {\bibfnamefont {F.}~\bibnamefont
  {Tacchino}}, \bibinfo {author} {\bibfnamefont {C.}~\bibnamefont
  {Macchiavello}}, \bibinfo {author} {\bibfnamefont {D.}~\bibnamefont
  {Gerace}},\ and\ \bibinfo {author} {\bibfnamefont {D.}~\bibnamefont
  {Bajoni}},\ }\bibfield  {title} {\bibinfo {title} {An artificial neuron
  implemented on an actual quantum processor},\ }\href
  {https://doi.org/10.1038/s41534-019-0140-4} {\bibfield  {journal} {\bibinfo
  {journal} {Npj Quantum Inf.}\ }\textbf {\bibinfo {volume} {5}},\ \bibinfo
  {pages} {26} (\bibinfo {year} {2019})}\BibitemShut {NoStop}%
\bibitem [{\citenamefont {Mangini}\ \emph {et~al.}(2020)\citenamefont
  {Mangini}, \citenamefont {Tacchino}, \citenamefont {Gerace}, \citenamefont
  {Macchiavello},\ and\ \citenamefont {Bajoni}}]{art:Mangini-Tacchino}%
  \BibitemOpen
  \bibfield  {author} {\bibinfo {author} {\bibfnamefont {S.}~\bibnamefont
  {Mangini}}, \bibinfo {author} {\bibfnamefont {F.}~\bibnamefont {Tacchino}},
  \bibinfo {author} {\bibfnamefont {D.}~\bibnamefont {Gerace}}, \bibinfo
  {author} {\bibfnamefont {C.}~\bibnamefont {Macchiavello}},\ and\ \bibinfo
  {author} {\bibfnamefont {D.}~\bibnamefont {Bajoni}},\ }\bibfield  {title}
  {\bibinfo {title} {Quantum computing model of an artificial neuron with
  continuously valued input data},\ }\href
  {https://doi.org/10.1088/2632-2153/abaf98} {\bibfield  {journal} {\bibinfo
  {journal} {Mach. Learn.: Sci. Technol.}\ }\textbf {\bibinfo {volume} {1}},\
  \bibinfo {pages} {045008} (\bibinfo {year} {2020})}\BibitemShut {NoStop}%
\bibitem [{\citenamefont {Cerezo}\ \emph {et~al.}(2021)\citenamefont {Cerezo},
  \citenamefont {Arrasmith}, \citenamefont {Babbush}, \citenamefont {Benjamin},
  \citenamefont {Endo}, \citenamefont {Fujii}, \citenamefont {McClean},
  \citenamefont {Mitarai}, \citenamefont {Yuan}, \citenamefont {Cincio},\ and\
  \citenamefont {Coles}}]{art:Cerezo1}%
  \BibitemOpen
  \bibfield  {author} {\bibinfo {author} {\bibfnamefont {M.}~\bibnamefont
  {Cerezo}}, \bibinfo {author} {\bibfnamefont {A.}~\bibnamefont {Arrasmith}},
  \bibinfo {author} {\bibfnamefont {R.}~\bibnamefont {Babbush}}, \bibinfo
  {author} {\bibfnamefont {S.~C.}\ \bibnamefont {Benjamin}}, \bibinfo {author}
  {\bibfnamefont {S.}~\bibnamefont {Endo}}, \bibinfo {author} {\bibfnamefont
  {K.}~\bibnamefont {Fujii}}, \bibinfo {author} {\bibfnamefont {J.~R.}\
  \bibnamefont {McClean}}, \bibinfo {author} {\bibfnamefont {K.}~\bibnamefont
  {Mitarai}}, \bibinfo {author} {\bibfnamefont {X.}~\bibnamefont {Yuan}},
  \bibinfo {author} {\bibfnamefont {L.}~\bibnamefont {Cincio}},\ and\ \bibinfo
  {author} {\bibfnamefont {P.~J.}\ \bibnamefont {Coles}},\ }\bibfield  {title}
  {\bibinfo {title} {Variational quantum algorithms},\ }\href
  {https://doi.org/10.1038/s42254-021-00348-9} {\bibfield  {journal} {\bibinfo
  {journal} {Nat. Rev. Phys.}\ }\textbf {\bibinfo {volume} {3}},\ \bibinfo
  {pages} {625–644} (\bibinfo {year} {2021})}\BibitemShut {NoStop}%
\bibitem [{\citenamefont {Cerezo}\ \emph {et~al.}(2024)\citenamefont {Cerezo},
  \citenamefont {Larocca}, \citenamefont {García-Martín}, \citenamefont
  {Diaz}, \citenamefont {Braccia}, \citenamefont {Fontana}, \citenamefont
  {Rudolph}, \citenamefont {Bermejo}, \citenamefont {Ijaz}, \citenamefont
  {Thanasilp}, \citenamefont {Anschuetz},\ and\ \citenamefont
  {Holmes}}]{art:Cerezo2}%
  \BibitemOpen
  \bibfield  {author} {\bibinfo {author} {\bibfnamefont {M.}~\bibnamefont
  {Cerezo}}, \bibinfo {author} {\bibfnamefont {M.}~\bibnamefont {Larocca}},
  \bibinfo {author} {\bibfnamefont {D.}~\bibnamefont {García-Martín}},
  \bibinfo {author} {\bibfnamefont {N.~L.}\ \bibnamefont {Diaz}}, \bibinfo
  {author} {\bibfnamefont {P.}~\bibnamefont {Braccia}}, \bibinfo {author}
  {\bibfnamefont {E.}~\bibnamefont {Fontana}}, \bibinfo {author} {\bibfnamefont
  {M.~S.}\ \bibnamefont {Rudolph}}, \bibinfo {author} {\bibfnamefont
  {P.}~\bibnamefont {Bermejo}}, \bibinfo {author} {\bibfnamefont
  {A.}~\bibnamefont {Ijaz}}, \bibinfo {author} {\bibfnamefont {S.}~\bibnamefont
  {Thanasilp}}, \bibinfo {author} {\bibfnamefont {E.~R.}\ \bibnamefont
  {Anschuetz}},\ and\ \bibinfo {author} {\bibfnamefont {Z.}~\bibnamefont
  {Holmes}},\ }\href@noop {} {\bibinfo {title} {Does provable absence of barren
  plateaus imply classical simulability? {O}r, why we need to rethink
  variational quantum computing}} (\bibinfo {year} {2024}),\ \Eprint
  {https://arxiv.org/abs/2312.09121} {arXiv:2312.09121 [quant-ph]} \BibitemShut
  {NoStop}%
\bibitem [{\citenamefont {Senokosov}\ \emph {et~al.}(2024)\citenamefont
  {Senokosov}, \citenamefont {Sedykh}, \citenamefont {Sagingalieva},
  \citenamefont {Kyriacou},\ and\ \citenamefont {Melnikov}}]{art:Senokosov}%
  \BibitemOpen
  \bibfield  {author} {\bibinfo {author} {\bibfnamefont {A.}~\bibnamefont
  {Senokosov}}, \bibinfo {author} {\bibfnamefont {A.}~\bibnamefont {Sedykh}},
  \bibinfo {author} {\bibfnamefont {A.}~\bibnamefont {Sagingalieva}}, \bibinfo
  {author} {\bibfnamefont {B.}~\bibnamefont {Kyriacou}},\ and\ \bibinfo
  {author} {\bibfnamefont {A.}~\bibnamefont {Melnikov}},\ }\bibfield  {title}
  {\bibinfo {title} {Quantum machine learning for image classification},\
  }\href {https://doi.org/10.1088/2632-2153/ad2aef} {\bibfield  {journal}
  {\bibinfo  {journal} {Mach. Learn.: Sci. Technol.}\ }\textbf {\bibinfo
  {volume} {5}},\ \bibinfo {pages} {015040} (\bibinfo {year}
  {2024})}\BibitemShut {NoStop}%
\bibitem [{\citenamefont {Steinbrecher}\ \emph {et~al.}(2019)\citenamefont
  {Steinbrecher}, \citenamefont {Olson}, \citenamefont {Englund},\ and\
  \citenamefont {Carolan}}]{art:Steinbrecher}%
  \BibitemOpen
  \bibfield  {author} {\bibinfo {author} {\bibfnamefont {G.~R.}\ \bibnamefont
  {Steinbrecher}}, \bibinfo {author} {\bibfnamefont {J.~P.}\ \bibnamefont
  {Olson}}, \bibinfo {author} {\bibfnamefont {D.}~\bibnamefont {Englund}},\
  and\ \bibinfo {author} {\bibfnamefont {J.}~\bibnamefont {Carolan}},\
  }\bibfield  {title} {\bibinfo {title} {Quantum optical neural networks},\
  }\href {https://doi.org/10.1038/s41534-019-0174-7} {\bibfield  {journal}
  {\bibinfo  {journal} {Npj Quantum Inf.}\ }\textbf {\bibinfo {volume} {5}},\
  \bibinfo {pages} {60} (\bibinfo {year} {2019})}\BibitemShut {NoStop}%
\bibitem [{\citenamefont {Killoran}\ \emph {et~al.}(2019)\citenamefont
  {Killoran}, \citenamefont {Bromley}, \citenamefont {Arrazola}, \citenamefont
  {Schuld}, \citenamefont {Quesada},\ and\ \citenamefont
  {Lloyd}}]{art:Killoran}%
  \BibitemOpen
  \bibfield  {author} {\bibinfo {author} {\bibfnamefont {N.}~\bibnamefont
  {Killoran}}, \bibinfo {author} {\bibfnamefont {T.~R.}\ \bibnamefont
  {Bromley}}, \bibinfo {author} {\bibfnamefont {J.~M.}\ \bibnamefont
  {Arrazola}}, \bibinfo {author} {\bibfnamefont {M.}~\bibnamefont {Schuld}},
  \bibinfo {author} {\bibfnamefont {N.}~\bibnamefont {Quesada}},\ and\ \bibinfo
  {author} {\bibfnamefont {S.}~\bibnamefont {Lloyd}},\ }\bibfield  {title}
  {\bibinfo {title} {Continuous-variable quantum neural networks},\ }\href
  {https://doi.org/10.1103/PhysRevResearch.1.033063} {\bibfield  {journal}
  {\bibinfo  {journal} {Phys. Rev. Res.}\ }\textbf {\bibinfo {volume} {1}},\
  \bibinfo {pages} {033063} (\bibinfo {year} {2019})}\BibitemShut {NoStop}%
\bibitem [{\citenamefont {Bartkiewicz}\ \emph {et~al.}(2020)\citenamefont
  {Bartkiewicz}, \citenamefont {Gneiting}, \citenamefont {Černoch},
  \citenamefont {Jiráková}, \citenamefont {Lemr},\ and\ \citenamefont
  {Nori}}]{art:Bartkiewicz}%
  \BibitemOpen
  \bibfield  {author} {\bibinfo {author} {\bibfnamefont {K.}~\bibnamefont
  {Bartkiewicz}}, \bibinfo {author} {\bibfnamefont {C.}~\bibnamefont
  {Gneiting}}, \bibinfo {author} {\bibfnamefont {A.}~\bibnamefont {Černoch}},
  \bibinfo {author} {\bibfnamefont {K.}~\bibnamefont {Jiráková}}, \bibinfo
  {author} {\bibfnamefont {K.}~\bibnamefont {Lemr}},\ and\ \bibinfo {author}
  {\bibfnamefont {F.}~\bibnamefont {Nori}},\ }\bibfield  {title} {\bibinfo
  {title} {Experimental kernel-based quantum machine learning in finite feature
  space},\ }\href {https://doi.org/10.1038/s41598-020-68911-5} {\bibfield
  {journal} {\bibinfo  {journal} {Sci. Rep.}\ }\textbf {\bibinfo {volume}
  {10}},\ \bibinfo {pages} {12356} (\bibinfo {year} {2020})}\BibitemShut
  {NoStop}%
\bibitem [{\citenamefont {Sui}\ \emph {et~al.}(2020)\citenamefont {Sui},
  \citenamefont {Wu}, \citenamefont {Liu}, \citenamefont {Chen},\ and\
  \citenamefont {Gu}}]{art:Sui}%
  \BibitemOpen
  \bibfield  {author} {\bibinfo {author} {\bibfnamefont {X.}~\bibnamefont
  {Sui}}, \bibinfo {author} {\bibfnamefont {Q.}~\bibnamefont {Wu}}, \bibinfo
  {author} {\bibfnamefont {J.}~\bibnamefont {Liu}}, \bibinfo {author}
  {\bibfnamefont {Q.}~\bibnamefont {Chen}},\ and\ \bibinfo {author}
  {\bibfnamefont {G.}~\bibnamefont {Gu}},\ }\bibfield  {title} {\bibinfo
  {title} {A review of optical neural networks},\ }\href
  {https://doi.org/10.1109/ACCESS.2020.2987333} {\bibfield  {journal} {\bibinfo
   {journal} {IEEE Access}\ }\textbf {\bibinfo {volume} {8}},\ \bibinfo {pages}
  {70773} (\bibinfo {year} {2020})}\BibitemShut {NoStop}%
\bibitem [{\citenamefont {Zhang}\ \emph {et~al.}(2021)\citenamefont {Zhang},
  \citenamefont {Zhan}, \citenamefont {Liao}, \citenamefont {Zheng},
  \citenamefont {Jiang}, \citenamefont {Mi}, \citenamefont {Yao},\ and\
  \citenamefont {Zhang}}]{art:AonanZhang}%
  \BibitemOpen
  \bibfield  {author} {\bibinfo {author} {\bibfnamefont {A.}~\bibnamefont
  {Zhang}}, \bibinfo {author} {\bibfnamefont {H.}~\bibnamefont {Zhan}},
  \bibinfo {author} {\bibfnamefont {J.}~\bibnamefont {Liao}}, \bibinfo {author}
  {\bibfnamefont {K.}~\bibnamefont {Zheng}}, \bibinfo {author} {\bibfnamefont
  {T.}~\bibnamefont {Jiang}}, \bibinfo {author} {\bibfnamefont
  {M.}~\bibnamefont {Mi}}, \bibinfo {author} {\bibfnamefont {P.}~\bibnamefont
  {Yao}},\ and\ \bibinfo {author} {\bibfnamefont {L.}~\bibnamefont {Zhang}},\
  }\bibfield  {title} {\bibinfo {title} {Quantum verification of {NP} problems
  with single photons and linear optics},\ }\href
  {https://doi.org/10.1038/s41377-021-00608-4} {\bibfield  {journal} {\bibinfo
  {journal} {Light Sci. Appl.}\ }\textbf {\bibinfo {volume} {10}},\ \bibinfo
  {pages} {169} (\bibinfo {year} {2021})}\BibitemShut {NoStop}%
\bibitem [{\citenamefont {Stanev}\ \emph {et~al.}(2023)\citenamefont {Stanev},
  \citenamefont {Spagnolo},\ and\ \citenamefont {Sciarrino}}]{art:Stanev}%
  \BibitemOpen
  \bibfield  {author} {\bibinfo {author} {\bibfnamefont {D.}~\bibnamefont
  {Stanev}}, \bibinfo {author} {\bibfnamefont {N.}~\bibnamefont {Spagnolo}},\
  and\ \bibinfo {author} {\bibfnamefont {F.}~\bibnamefont {Sciarrino}},\
  }\bibfield  {title} {\bibinfo {title} {Deterministic optimal quantum cloning
  via a quantum-optical neural network},\ }\href
  {https://doi.org/10.1103/PhysRevResearch.5.013139} {\bibfield  {journal}
  {\bibinfo  {journal} {Phys. Rev. Res.}\ }\textbf {\bibinfo {volume} {5}},\
  \bibinfo {pages} {013139} (\bibinfo {year} {2023})}\BibitemShut {NoStop}%
\bibitem [{\citenamefont {Wood}\ \emph {et~al.}(2024)\citenamefont {Wood},
  \citenamefont {Shrapnel},\ and\ \citenamefont {Milburn}}]{art:Wood}%
  \BibitemOpen
  \bibfield  {author} {\bibinfo {author} {\bibfnamefont {C.}~\bibnamefont
  {Wood}}, \bibinfo {author} {\bibfnamefont {S.}~\bibnamefont {Shrapnel}},\
  and\ \bibinfo {author} {\bibfnamefont {G.~J.}\ \bibnamefont {Milburn}},\
  }\href@noop {} {\bibinfo {title} {A {K}err kernel quantum learning machine}}
  (\bibinfo {year} {2024}),\ \Eprint {https://arxiv.org/abs/2404.01787}
  {arXiv:2404.01787 [quant-ph]} \BibitemShut {NoStop}%
\bibitem [{\citenamefont {Hong}\ \emph {et~al.}(1987)\citenamefont {Hong},
  \citenamefont {Ou},\ and\ \citenamefont {Mandel}}]{art:Mandel}%
  \BibitemOpen
  \bibfield  {author} {\bibinfo {author} {\bibfnamefont {C.~K.}\ \bibnamefont
  {Hong}}, \bibinfo {author} {\bibfnamefont {Z.~Y.}\ \bibnamefont {Ou}},\ and\
  \bibinfo {author} {\bibfnamefont {L.}~\bibnamefont {Mandel}},\ }\bibfield
  {title} {\bibinfo {title} {Measurement of subpicosecond time intervals
  between two photons by interference},\ }\href
  {https://doi.org/10.1103/PhysRevLett.59.2044} {\bibfield  {journal} {\bibinfo
   {journal} {Phys. Rev. Lett.}\ }\textbf {\bibinfo {volume} {59}},\ \bibinfo
  {pages} {2044} (\bibinfo {year} {1987})}\BibitemShut {NoStop}%
\bibitem [{\citenamefont {Garcia-Escartin}\ and\ \citenamefont
  {Chamorro-Posada}(2013)}]{art:Garcia-Escartin}%
  \BibitemOpen
  \bibfield  {author} {\bibinfo {author} {\bibfnamefont {J.~C.}\ \bibnamefont
  {Garcia-Escartin}}\ and\ \bibinfo {author} {\bibfnamefont {P.}~\bibnamefont
  {Chamorro-Posada}},\ }\bibfield  {title} {\bibinfo {title} {{SWAP} test and
  {H}ong-{O}u-{M}andel effect are equivalent},\ }\href
  {https://doi.org/10.1103/PhysRevA.87.052330} {\bibfield  {journal} {\bibinfo
  {journal} {Phys. Rev. A}\ }\textbf {\bibinfo {volume} {87}},\ \bibinfo
  {pages} {052330} (\bibinfo {year} {2013})}\BibitemShut {NoStop}%
\bibitem [{\citenamefont {Sadana}\ \emph {et~al.}(2019)\citenamefont {Sadana},
  \citenamefont {Ghosh}, \citenamefont {Joarder}, \citenamefont {Lakshmi},
  \citenamefont {Sanders},\ and\ \citenamefont {Sinha}}]{art:Sadana}%
  \BibitemOpen
  \bibfield  {author} {\bibinfo {author} {\bibfnamefont {S.}~\bibnamefont
  {Sadana}}, \bibinfo {author} {\bibfnamefont {D.}~\bibnamefont {Ghosh}},
  \bibinfo {author} {\bibfnamefont {K.}~\bibnamefont {Joarder}}, \bibinfo
  {author} {\bibfnamefont {A.~N.}\ \bibnamefont {Lakshmi}}, \bibinfo {author}
  {\bibfnamefont {B.~C.}\ \bibnamefont {Sanders}},\ and\ \bibinfo {author}
  {\bibfnamefont {U.}~\bibnamefont {Sinha}},\ }\bibfield  {title} {\bibinfo
  {title} {Near-100\% two-photon-like coincidence-visibility dip with classical
  light and the role of complementarity},\ }\href
  {https://doi.org/10.1103/PhysRevA.100.013839} {\bibfield  {journal} {\bibinfo
   {journal} {Phys. Rev. A}\ }\textbf {\bibinfo {volume} {100}},\ \bibinfo
  {pages} {013839} (\bibinfo {year} {2019})}\BibitemShut {NoStop}%
\bibitem [{\citenamefont {Hiekkam\"aki}\ and\ \citenamefont
  {Fickler}(2021)}]{art:Hiekkamaki}%
  \BibitemOpen
  \bibfield  {author} {\bibinfo {author} {\bibfnamefont {M.}~\bibnamefont
  {Hiekkam\"aki}}\ and\ \bibinfo {author} {\bibfnamefont {R.}~\bibnamefont
  {Fickler}},\ }\bibfield  {title} {\bibinfo {title} {High-dimensional
  two-photon interference effects in spatial modes},\ }\href
  {https://doi.org/10.1103/PhysRevLett.126.123601} {\bibfield  {journal}
  {\bibinfo  {journal} {Phys. Rev. Lett.}\ }\textbf {\bibinfo {volume} {126}},\
  \bibinfo {pages} {123601} (\bibinfo {year} {2021})}\BibitemShut {NoStop}%
\bibitem [{\citenamefont {Bowie}\ \emph {et~al.}(2023)\citenamefont {Bowie},
  \citenamefont {Shrapnel},\ and\ \citenamefont {Kewming}}]{art:Bowie}%
  \BibitemOpen
  \bibfield  {author} {\bibinfo {author} {\bibfnamefont {C.}~\bibnamefont
  {Bowie}}, \bibinfo {author} {\bibfnamefont {S.}~\bibnamefont {Shrapnel}},\
  and\ \bibinfo {author} {\bibfnamefont {M.~J.}\ \bibnamefont {Kewming}},\
  }\bibfield  {title} {\bibinfo {title} {Quantum kernel evaluation via
  {H}ong–{O}u–{M}andel interference},\ }\href
  {https://doi.org/10.1088/2058-9565/acfba9} {\bibfield  {journal} {\bibinfo
  {journal} {Quantum Sci. Technol.}\ }\textbf {\bibinfo {volume} {9}},\
  \bibinfo {pages} {015001} (\bibinfo {year} {2023})}\BibitemShut {NoStop}%
\bibitem [{\citenamefont {Glorot}\ and\ \citenamefont
  {Bengio}(2010)}]{art:Glorot}%
  \BibitemOpen
  \bibfield  {author} {\bibinfo {author} {\bibfnamefont {X.}~\bibnamefont
  {Glorot}}\ and\ \bibinfo {author} {\bibfnamefont {Y.}~\bibnamefont
  {Bengio}},\ }\bibfield  {title} {\bibinfo {title} {Understanding the
  difficulty of training deep feedforward neural networks},\ }in\ \href
  {https://proceedings.mlr.press/v9/glorot10a.html} {\emph {\bibinfo
  {booktitle} {Proceedings of the Thirteenth International Conference on
  Artificial Intelligence and Statistics}}},\ \bibinfo {series} {PMLR},
  Vol.~\bibinfo {volume} {9}\ (\bibinfo {year} {2010})\ pp.\ \bibinfo {pages}
  {249--256}\BibitemShut {NoStop}%
\bibitem [{\citenamefont {Neff}\ \emph {et~al.}(1990)\citenamefont {Neff},
  \citenamefont {Athale},\ and\ \citenamefont {Lee}}]{art:Neff}%
  \BibitemOpen
  \bibfield  {author} {\bibinfo {author} {\bibfnamefont {J.}~\bibnamefont
  {Neff}}, \bibinfo {author} {\bibfnamefont {R.}~\bibnamefont {Athale}},\ and\
  \bibinfo {author} {\bibfnamefont {S.}~\bibnamefont {Lee}},\ }\bibfield
  {title} {\bibinfo {title} {Two-dimensional spatial light modulators: a
  tutorial},\ }\href {https://doi.org/10.1109/5.53402} {\bibfield  {journal}
  {\bibinfo  {journal} {Proc. IEEE}\ }\textbf {\bibinfo {volume} {78}},\
  \bibinfo {pages} {826} (\bibinfo {year} {1990})}\BibitemShut {NoStop}%
\bibitem [{\citenamefont {Zhang}\ \emph {et~al.}(2014)\citenamefont {Zhang},
  \citenamefont {You},\ and\ \citenamefont {Chu}}]{art:ZichenZhang}%
  \BibitemOpen
  \bibfield  {author} {\bibinfo {author} {\bibfnamefont {Z.}~\bibnamefont
  {Zhang}}, \bibinfo {author} {\bibfnamefont {Z.}~\bibnamefont {You}},\ and\
  \bibinfo {author} {\bibfnamefont {D.}~\bibnamefont {Chu}},\ }\bibfield
  {title} {\bibinfo {title} {Fundamentals of phase-only liquid crystal on
  silicon ({LCOS}) devices},\ }\href {https://doi.org/10.1038/lsa.2014.94}
  {\bibfield  {journal} {\bibinfo  {journal} {Light Sci. Appl.}\ }\textbf
  {\bibinfo {volume} {3}},\ \bibinfo {pages} {e213} (\bibinfo {year}
  {2014})}\BibitemShut {NoStop}%
\bibitem [{\citenamefont {{Tensor{F}low Developers}}(2023)}]{soft:TensorFlow}%
  \BibitemOpen
  \bibfield  {author} {\bibinfo {author} {\bibnamefont {{Tensor{F}low
  Developers}}},\ }\href {https://doi.org/10.5281/zenodo.10126399} {\bibinfo
  {title} {Tensor{F}low}} (\bibinfo {year} {2023})\BibitemShut {NoStop}%
\bibitem [{\citenamefont {Collobert}\ and\ \citenamefont
  {Bengio}()}]{art:Bengio}%
  \BibitemOpen
  \bibfield  {author} {\bibinfo {author} {\bibfnamefont {R.}~\bibnamefont
  {Collobert}}\ and\ \bibinfo {author} {\bibfnamefont {S.}~\bibnamefont
  {Bengio}},\ }\bibfield  {title} {\bibinfo {title} {Links between perceptrons,
  {MLP}s and {SVM}s},\ }in\ \href {https://doi.org/10.1145/1015330.1015415}
  {\emph {\bibinfo {booktitle} {Proceedings of the Twenty-First International
  Conference on Machine Learning}}},\ \bibinfo {series and number} {ICML '04},\
  p.~\bibinfo {pages} {23}\BibitemShut {NoStop}%
\bibitem [{\citenamefont {Morgillo}\ and\ \citenamefont
  {Roncallo}()}]{rep:QON}%
  \BibitemOpen
  \bibfield  {author} {\bibinfo {author} {\bibfnamefont {A.~R.}\ \bibnamefont
  {Morgillo}}\ and\ \bibinfo {author} {\bibfnamefont {S.}~\bibnamefont
  {Roncallo}},\ }\href@noop {} {}\bibinfo {note}
  {\url{https://github.com/simoneroncallo/quantum-optical-neuron}}\BibitemShut
  {NoStop}%
\bibitem [{\citenamefont {Rezai}\ and\ \citenamefont
  {Salehi}(2022)}]{art:Rezai}%
  \BibitemOpen
  \bibfield  {author} {\bibinfo {author} {\bibfnamefont {M.}~\bibnamefont
  {Rezai}}\ and\ \bibinfo {author} {\bibfnamefont {J.~A.}\ \bibnamefont
  {Salehi}},\ }\bibfield  {title} {\bibinfo {title} {Fundamentals of quantum
  {F}ourier optics},\ }\href {https://doi.org/10.1109/TQE.2022.3224799}
  {\bibfield  {journal} {\bibinfo  {journal} {IEEE Trans. Quantum Eng.}\
  }\textbf {\bibinfo {volume} {4}},\ \bibinfo {pages} {1} (\bibinfo {year}
  {2022})}\BibitemShut {NoStop}%
\bibitem [{\citenamefont {Saleh}\ and\ \citenamefont
  {Teich}(1991)}]{book:Saleh}%
  \BibitemOpen
  \bibfield  {author} {\bibinfo {author} {\bibfnamefont {B.~E.~A.}\
  \bibnamefont {Saleh}}\ and\ \bibinfo {author} {\bibfnamefont {M.~C.}\
  \bibnamefont {Teich}},\ }\href {https://doi.org/10.1002/0471213748} {\emph
  {\bibinfo {title} {Fundamentals of Photonics}}}\ (\bibinfo  {publisher}
  {Wiley},\ \bibinfo {year} {1991})\BibitemShut {NoStop}%
\bibitem [{\citenamefont {Brańczyk}(2017)}]{art:Branczyk}%
  \BibitemOpen
  \bibfield  {author} {\bibinfo {author} {\bibfnamefont {A.~M.}\ \bibnamefont
  {Brańczyk}},\ }\href@noop {} {\bibinfo {title} {Hong-{O}u-{M}andel
  interference}} (\bibinfo {year} {2017}),\ \Eprint
  {https://arxiv.org/abs/1711.00080} {arXiv:1711.00080 [quant-ph]} \BibitemShut
  {NoStop}%
\bibitem [{\citenamefont {Boyle}\ and\ \citenamefont
  {Smith}(1970)}]{art:Boyle}%
  \BibitemOpen
  \bibfield  {author} {\bibinfo {author} {\bibfnamefont {W.~S.}\ \bibnamefont
  {Boyle}}\ and\ \bibinfo {author} {\bibfnamefont {G.~E.}\ \bibnamefont
  {Smith}},\ }\bibfield  {title} {\bibinfo {title} {Charge coupled
  semiconductor devices},\ }\href
  {https://doi.org/10.1002/j.1538-7305.1970.tb01790.x} {\bibfield  {journal}
  {\bibinfo  {journal} {Bell Syst. Tech. J.}\ }\textbf {\bibinfo {volume}
  {49}},\ \bibinfo {pages} {587} (\bibinfo {year} {1970})}\BibitemShut
  {NoStop}%
\bibitem [{\citenamefont {Kolobov}(1999)}]{art:Kolobov}%
  \BibitemOpen
  \bibfield  {author} {\bibinfo {author} {\bibfnamefont {M.~I.}\ \bibnamefont
  {Kolobov}},\ }\bibfield  {title} {\bibinfo {title} {The spatial behavior of
  nonclassical light},\ }\href {https://doi.org/10.1103/RevModPhys.71.1539}
  {\bibfield  {journal} {\bibinfo  {journal} {Rev. Mod. Phys.}\ }\textbf
  {\bibinfo {volume} {71}},\ \bibinfo {pages} {1539} (\bibinfo {year}
  {1999})}\BibitemShut {NoStop}%
\bibitem [{\citenamefont {Delaubert}\ \emph {et~al.}(2008)\citenamefont
  {Delaubert}, \citenamefont {Treps}, \citenamefont {Fabre}, \citenamefont
  {Bachor},\ and\ \citenamefont {Réfrégier}}]{art:Fabre}%
  \BibitemOpen
  \bibfield  {author} {\bibinfo {author} {\bibfnamefont {V.}~\bibnamefont
  {Delaubert}}, \bibinfo {author} {\bibfnamefont {N.}~\bibnamefont {Treps}},
  \bibinfo {author} {\bibfnamefont {C.}~\bibnamefont {Fabre}}, \bibinfo
  {author} {\bibfnamefont {H.~A.}\ \bibnamefont {Bachor}},\ and\ \bibinfo
  {author} {\bibfnamefont {P.}~\bibnamefont {Réfrégier}},\ }\bibfield
  {title} {\bibinfo {title} {Quantum limits in image processing},\ }\href
  {https://doi.org/10.1209/0295-5075/81/44001} {\bibfield  {journal} {\bibinfo
  {journal} {EPL}\ }\textbf {\bibinfo {volume} {81}},\ \bibinfo {pages} {44001}
  (\bibinfo {year} {2008})}\BibitemShut {NoStop}%
\bibitem [{\citenamefont {Santosa}\ and\ \citenamefont
  {Symes}(1986)}]{art:Santosa}%
  \BibitemOpen
  \bibfield  {author} {\bibinfo {author} {\bibfnamefont {F.}~\bibnamefont
  {Santosa}}\ and\ \bibinfo {author} {\bibfnamefont {W.~W.}\ \bibnamefont
  {Symes}},\ }\bibfield  {title} {\bibinfo {title} {Linear inversion of
  band-limited reflection seismograms},\ }\href
  {https://doi.org/10.1137/0907087} {\bibfield  {journal} {\bibinfo  {journal}
  {SIAM J. Sci. Stat. Comput.}\ }\textbf {\bibinfo {volume} {7}},\ \bibinfo
  {pages} {1307–1330} (\bibinfo {year} {1986})}\BibitemShut {NoStop}%
\bibitem [{\citenamefont {Tikhonov}\ and\ \citenamefont
  {Glasko}(1965)}]{art:Tikhonov}%
  \BibitemOpen
  \bibfield  {author} {\bibinfo {author} {\bibfnamefont {A.}~\bibnamefont
  {Tikhonov}}\ and\ \bibinfo {author} {\bibfnamefont {V.}~\bibnamefont
  {Glasko}},\ }\bibfield  {title} {\bibinfo {title} {Use of the regularization
  method in non-linear problems},\ }\href
  {https://doi.org/10.1016/0041-5553(65)90150-3} {\bibfield  {journal}
  {\bibinfo  {journal} {{USSR} Comput. Math. Math. Phys.}\ }\textbf {\bibinfo
  {volume} {5}},\ \bibinfo {pages} {93} (\bibinfo {year} {1965})}\BibitemShut
  {NoStop}%
\bibitem [{\citenamefont {Rotondi}\ \emph {et~al.}(2022)\citenamefont
  {Rotondi}, \citenamefont {Pedroni},\ and\ \citenamefont
  {Pievatolo}}]{book:Rotondi}%
  \BibitemOpen
  \bibfield  {author} {\bibinfo {author} {\bibfnamefont {A.}~\bibnamefont
  {Rotondi}}, \bibinfo {author} {\bibfnamefont {P.}~\bibnamefont {Pedroni}},\
  and\ \bibinfo {author} {\bibfnamefont {A.}~\bibnamefont {Pievatolo}},\ }\href
  {https://doi.org/10.1007/978-3-031-09429-3} {\emph {\bibinfo {title}
  {Probability, Statistics and Simulation: With Application Programs Written in
  R}}}\ (\bibinfo  {publisher} {Springer},\ \bibinfo {year} {2022})\BibitemShut
  {NoStop}%
\bibitem [{\citenamefont {Nielsen}\ and\ \citenamefont
  {Chuang}(2010)}]{book:Nielsen}%
  \BibitemOpen
  \bibfield  {author} {\bibinfo {author} {\bibfnamefont {M.~A.}\ \bibnamefont
  {Nielsen}}\ and\ \bibinfo {author} {\bibfnamefont {I.~L.}\ \bibnamefont
  {Chuang}},\ }\href {https://doi.org/10.1017/CBO9780511976667} {\emph
  {\bibinfo {title} {Quantum Computation and Quantum Information: 10th
  Anniversary Edition}}}\ (\bibinfo  {publisher} {Cambridge University Press},\
  \bibinfo {year} {2010})\BibitemShut {NoStop}%
\end{thebibliography}%
\end{document}